\definecolor{Gray}{gray}{.25}
\newtheorem{thm}{Theorem}
\newtheorem{lem}{Lemma}
\begin{document}
\vspace*{0.35in}

\begin{flushleft}
{\Large
\textbf\newline{Towards deep learning with segregated dendrites}
}
\newline

Jordan Guergiuev\textsuperscript{1,2},
Timothy P. Lillicrap\textsuperscript{4}, and 
Blake A. Richards\textsuperscript{1,2,3,*}
\\
\bigskip
\bf{1} Department of Biological Sciences, University of Toronto Scarborough, Toronto, ON, Canada
\\
\bf{2} Department of Cell and Systems Biology, University of Toronto, Toronto, ON, Canada
\\
\bf{3} Learning in Machines and Brains Program, Canadian Institute for Advanced Research, Toronto, ON, Canada
\\
\bf{4} DeepMind Technologies Inc., London, UK
\\
\bigskip
* Corresponding author, email: blake.richards@utoronto.ca

\end{flushleft}

\section*{Abstract}
\label{abstract}
Deep learning has led to significant advances in artificial intelligence, in part, by adopting strategies motivated by neurophysiology. However, it is unclear whether deep learning could occur in the real brain. Here, we show that a deep learning algorithm that utilizes multi-compartment neurons might help us to understand how the brain optimizes cost functions. Like neocortical pyramidal neurons, neurons in our model receive sensory information and higher-order feedback in electrotonically segregated compartments. Thanks to this segregation, the neurons in different layers of the network can coordinate synaptic weight updates. As a result, the network can learn to categorize images better than a single layer network. Furthermore, we show that our algorithm takes advantage of multilayer architectures to identify useful representations---the hallmark of deep learning. This work demonstrates that deep learning can be achieved using segregated dendritic compartments, which may help to explain the dendritic morphology of neocortical pyramidal neurons.

\section*{Introduction}
\label{introduction}
Deep learning refers to an approach in artificial intelligence (AI) that utilizes neural networks with multiple layers of processing units. Importantly, deep learning algorithms are designed to take advantage of these multi-layer network architectures in order to generate hierarchical representations wherein each successive layer identifies increasingly abstract, relevant variables for a given task \citep{bengio_scaling_2007,lecun_deep_2015}. In recent years, deep learning has revolutionized machine learning, opening the door to AI applications that can rival human capabilities in pattern recognition and control \citep{mnih_human-level_2015,silver_mastering_2016,he_delving_2015}. Interestingly, the representations that deep learning generates resemble those observed in the neocortex, particularly in higher-order sensory areas \citep{kubilius_deep_2016,khaligh-razavi_deep_2014,cadieu_deep_2014}, suggesting that something akin to deep learning is occurring in the mammalian brain \citep{yamins_using_2016,marblestone2016towards}.
\par
Yet, a large gap exists between deep learning in AI and our current understanding of learning and memory in neuroscience. In particular, unlike deep learning researchers, neuroscientists do not yet have a solution to the ``credit assignment problem'' \citep{rumelhart_learning_1986,lillicrap_random_2014,bengio_towards_2015}. The credit assignment problem refers to the fact that the behavioral consequences of synaptic changes in early layers of a neural network depend on the current connections in the downstream layers. For example, consider the behavioral effects of synaptic changes, i.e. long-term potentiation/depression (LTP/LTD), occurring between different sensory circuits of the brain. Exactly how these synaptic changes will impact behavior and cognition depends on the downstream connections between the sensory circuits and motor or associative circuits (\hyperref[fig:F1]{Figure 1A}). Hence, learning to optimize some behavioral or cognitive function requires a method for identifying what a given neuron's place is within the wider neural network, i.e. it requires a means of assigning ``credit'' (or ``blame'') to neurons in early sensory areas for their contribution to the final behavioral output \citep{lecun_deep_2015,bengio_towards_2015}. Despite its importance for real-world learning, the credit assignment problem has received little attention in neuroscience.
\par
The lack of attention to credit assignment in neuroscience is, arguably, a function of the history of biological studies of synaptic plasticity. Due to the well-established dependence of LTP and LTD on presynaptic and postsynaptic activity, current theories of learning in neuroscience tend to emphasize Hebbian learning algorithms \citep{dan_spike_2004,martin_synaptic_2000}, that is, learning algorithms where synaptic changes depend solely on presynaptic and postsynaptic activity. Hebbian learning models can produce representations that resemble the representations in the real brain \citep{zylberberg_sparse_2011,leibo_view-tolerant_2017} and they are backed up by decades of experimental findings \citep{malenka_ltp_2004,dan_spike_2004,martin_synaptic_2000}. But, current Hebbian learning algorithms do not solve the credit assignment problem, nor do global neuromodulatory signals used in reinforcement learning \citep{lillicrap_random_2014}. As a result, deep learning systems from AI that can do credit assignment outperform existing Hebbian models of perceptual learning on a variety of tasks \citep{yamins_using_2016,khaligh-razavi_deep_2014}. This suggests that a critical, missing component in our current models of the neurobiology of learning and memory is an explanation of how the brain solves the credit assignment problem. 
\par
However, the most common solution to the credit assignment problem in AI is to use the backpropagation of error algorithm \citep{rumelhart_learning_1986}. Backpropagation assigns credit by \textit{explicitly} using current downstream synaptic connections to calculate synaptic weight updates in earlier layers, commonly termed ``hidden layers'' \citep{lecun_deep_2015} (\hyperref[fig:F1]{Figure 1B}). This technique, which is sometimes referred to as ``weight transport'', involves non-local transmission of synaptic weight information between layers of the network \citep{lillicrap_random_2014,grossberg_competitive_1987}. Weight transport is clearly unrealistic from a biological perspective \citep{bengio_towards_2015,crick_recent_1989}. It would require early sensory processing areas (e.g. V1, V2, V4) to have precise information about \textit{billions} of synaptic connections in downstream circuits (MT, IT, M2, EC, etc.). According to our current understanding, there is no physiological mechanism that could communicate this information in the brain. Some deep learning algorithms utilize purely Hebbian rules \citep{scellier2016towards,hinton_fast_2006}. But they depend on feedback synapses that are symmetric with feedforward synapses to solve the credit assignment problem \citep{scellier2016towards,hinton_fast_2006}, which is essentially a version of weight transport. Altogether, these artificial aspects of current deep learning solutions to credit assignment have rendered many scientists skeptical to the proposal that deep learning occurs in the real brain \citep{crick_recent_1989,grossberg_competitive_1987,harris_stability_2008,urbanczik_reinforcement_2009}.
\par
Recent findings have shown that these problems may be surmountable, though. \cite{lillicrap_random_2014}, \cite{lee_target_2014} and \cite{liao_how_2015} have demonstrated that it is possible to solve the credit assignment problem even while avoiding weight transport or symmetric feedback weights. The key to these learning algorithms is the use of feedback signals from output layers that are transmitted to calculate a local error signal, which then guides synaptic updates in hidden layers \citep{lee_target_2014,lillicrap_random_2014,liao_how_2015}. These learning algorithms can take advantage of multi-layer architectures, leading to performance that rivals backpropagation \citep{lee_target_2014,lillicrap_random_2014,liao_how_2015}. Hence, this work has provided a significant breakthrough in our understanding of how the real brain might do credit assignment.
\par
Nonetheless, the models of \cite{lillicrap_random_2014}, \cite{lee_target_2014} and \cite{liao_how_2015} involve some problematic assumptions. Specifically, although it is not directly stated in all of the papers, there is an implicit assumption that there is a separate feedback pathway for transmitting the signals that drive synaptic updates in hidden layers (\hyperref[fig:F2]{Figure 2A}). Such a pathway is required in these models because the synaptic weight updates in the hidden layers depend on the difference between feedback that is generated in response to a purely feedforward propagation of sensory information, and feedback that is guided by a teaching signal \citep{lillicrap_random_2014, lee_target_2014, liao_how_2015}. In order to calculate this difference, sensory information must be transmitted \textit{separately} from the feedback signals that are used to drive learning. In single compartment neurons, keeping feedforward sensory information separate from feedback signals is impossible without a separate pathway. At face value, such a pathway is possible. But, closer inspection uncovers a couple of difficulties with such a proposal.
\par
First, the error signals that solve the credit assignment problem are not global error signals (like neuromodulatory signals used in reinforcement learning). Rather, they are \textit{cell-by-cell} error signals. This would mean that the feedback pathway would require some degree of pairing, wherein each neuron in the hidden layer is paired with a feedback neuron (or circuit). That is not impossible, but there is no evidence to date of such an architecture in the neocortex. Second, the error that is communicated to the hidden layer is signed (i.e. it can be positive or negative), and the sign determines whether LTP or LTD occur in the hidden layer neurons \citep{lee_target_2014,lillicrap_random_2014,liao_how_2015}. Communicating signed signals with a spiking neuron can theoretically be done by using a baseline firing rate that the neuron can go above (for positive signals) or below (for negative signals). But, in practice, such systems are difficult to operate because as the error gets closer to zero any noise in the spiking of the neuron can switch the sign of the signal, which switches LTP to LTD, or \textit{vice versa}. This means that as learning progresses the network's ability to communicate error signs gets \textit{worse}. Therefore, the real brain's specific solution to the credit assignment problem is unlikely to involve a separate feedback pathway for cell-by-cell, signed signals to instruct plasticity.
\par
However, segregating the integration of feedforward and feedback signals does not require a separate pathway if neurons have more complicated morphologies than the point neurons typically used in artificial neural networks. Taking inspiration from biology, we note that real neurons are much more complex than single-compartments, and different signals can be integrated at distinct dendritic locations. Indeed, in the primary sensory areas of the neocortex, feedback from higher-order areas arrives in the distal apical dendrites of pyramidal neurons \citep{manita_top-down_2015,budd_extrastriate_1998,spratling_cortical_2002}, which are electrotonically very distant from the basal dendrites where feedforward sensory information is received \citep{larkum_new_1999,larkum_dendritic_2007,larkum_synaptic_2009}. Thus, as has been noted by previous authors \citep{kording_supervised_2001,spratling_cortical_2002,spratling_feedback_2006}, the anatomy of pyramidal neurons may actually provide the necessary segregation of feedforward and feedback information to calculate local error signals and perform deep learning in biological neural networks.
\par
Here, we show how deep learning can be implemented if neurons in hidden layers contain segregated ``basal'' and ``apical'' dendritic compartments for integrating feedforward and feedback signals separately (\hyperref[fig:F2]{Figure 2B}). Our model builds on previous neural networks research \citep{lee_target_2014,lillicrap_random_2014} as well as computational studies of supervised learning in multi-compartment neurons \citep{urbanczik_learning_2014,kording_supervised_2001,spratling_feedback_2006}. Importantly, we use the distinct basal and apical compartments in our neurons to integrate feedback signals separately from feedforward signals. With this, we build a local target for each hidden layer that coordinates learning across the layers of the network. We demonstrate that even with random synaptic weights for feedback into the apical compartment, our algorithm can coordinate learning to achieve classification of the MNIST database of hand-written digits better than can be achieved with single layer networks. Furthermore, we show that our algorithm allows the network to take advantage of multi-layer structures to build hierarchical, abstract representations, one of the hallmarks of deep learning \citep{lecun_deep_2015}. Our results demonstrate that deep learning can be implemented in a biologically feasible manner if feedforward and feedback signals are received at electrotonically segregated dendrites, as is the case in the mammalian neocortex.

\section*{Results}
\label{results}

\subsection*{A network architecture with segregated dendritic compartments}

Deep supervised learning with local weight updates requires that each neuron receive signals that can be used to determine its ``credit'' for the final behavioral output. We explored the idea that the cortico-cortical feedback signals to pyramidal cells could provide the required information for credit assignment. In particular, we were inspired by four observations from both machine learning and biology:

\begin{enumerate}
\item Current solutions to credit assignment without weight transport require segregated feedforward and feedback signals \citep{lee_target_2014,lillicrap_random_2014}.\\

\item In the neocortex, feedforward sensory information and higher-order cortico-cortical feedback are largely received by distinct dendritic compartments, namely the basal dendrites and distal apical dendrites, respectively \citep{spratling_cortical_2002,budd_extrastriate_1998}. \\

\item The distal apical dendrites of pyramidal neurons are electrotonically distant from the soma, such that passive transmission from the distal dendrites to the soma is significantly attenuated. Instead, apical communication to the soma depends on active propagation through the apical dendritic shaft driven by voltage-gated calcium channels. These non-linear, active events in the apical shaft generate prolonged upswings in the membrane potential, known as ``plateau potentials'', which can drive burst firing at the soma \citep{larkum_new_1999,larkum_synaptic_2009}.

\item Plateau potentials driven by apical activity can guide plasticity \textit{in vivo} \citep{bittner_conjunctive_2015}.

\end{enumerate}

With these considerations in mind, we hypothesized that the computations required for credit assignment could be achieved without any separate pathways for feedback signals. Instead, they could be achieved by having two distinct dendritic compartments in each hidden layer neuron: a ``basal'' compartment, strongly coupled to the soma for integrating feedforward information, and an ``apical'' compartment for integrating feedback information that would only drive activity at the soma when ``plateau potentials'' occur (\hyperref[fig:F3]{Figure 3A}). 
\par
As an initial test of this concept we built a network with a single hidden layer. Although this network is not very ``deep'', even a single hidden layer can improve performance over a one-layer architecture if the learning algorithm solves the credit assignment problem \citep{bengio_scaling_2007, lillicrap_random_2014}. Hence, we wanted to initially determine whether our network could take advantage of a hidden layer to reduce error at the output layer.
\par 
The network architecture is illustrated in \hyperref[fig:F3]{Figure 3A}. An image from the MNIST data set is used to set the spike rates of $\ell = 784$ Poisson point-process neurons in the input layer (one neuron per image pixel, rates-of-fire determined by pixel intensity). These project to a hidden layer with $m=500$ neurons. The neurons in the hidden layer are composed of three distinct compartments with their own voltages: the apical compartments (with voltages described by the vector $\bm{A}(t) = [A_1(t), ..., A_m(t)]$), the basal compartments (with voltages $\bm{B}(t) = [B_1(t), ..., B_m(t)]$), and the somatic compartments (with voltages $\bm{C}(t) = [C_1(t), ..., C_m(t)]$). (\textit{Note}: for notational clarity, all vectors and matrices in the paper are in boldface.) The voltages in the dendritic compartments are calculated as weighted sums of a postsynaptic potential kernel convolved with the incoming spike train (see \nameref{methods}, equations \eqref{eqn:psp} and \eqref{eqn:dend_voltages_hidden}), while the somatic voltages are calculated as leaky integrators of the dendritic inputs, i.e. for the $i^{th}$ hidden layer neuron:

\begin{align}
\label{eqn:soma_voltage_hidden_apical_conductance}
\frac{dC_i(t)}{dt} &= -g_L C_i(t) + g_B(B_i(t) - C_i(t)) + g_A(A_i(t) - C_i(t))
\end{align}

where $g_L$, $g_B$ and $g_A$ represent the leak conductance, the conductance from the basal dendrites, and the conductance from the apical dendrites, respectively. Note, for mathematical simplicity we are assuming a resting membrane potential of 0 V and a membrane capacitance of 1 F (these values do not affect the results). We implement electrotonic segregation in the model by altering the $g_A$ value---low values for $g_A$ lead to electrotonically segregated apical dendrites. In the initial set of simulations we set $g_A=0$, but we relax this hard constraint in later simulations. The somatic compartments generate spikes using Poisson processes. The instantaneous rates of these processes are described by the vector $\bm{\lambda}^C(t) = [\lambda_1^C(t), ..., \lambda_m^C(t)]$, which is in units of spikes/s or Hz. These rates-of-fire are determined by a non-linear sigmoid function, $\sigma(\cdot)$, applied to the somatic voltages, i.e. for the $i^th$ hidden layer neuron:

\begin{align}
\begin{split}
\label{eqn:spike_rates}
\lambda_i^C(t) &= \lambda_{max} \sigma(C_i(t)) \\
               &= \lambda_{max} \frac{1}{1 + e^{-C_i(t)}}
\end{split}
\end{align}

where $\lambda_{max}$ is the maximum rate-of-fire for the neurons.
\par
Spiking inputs from the input layer arrive at the basal compartments via the $m \times \ell$ synaptic weight matrix $\bm{W}^0$, hidden layer somatic spikes are projected to the output layer neurons via the $n \times m$ synaptic weight matrix $\bm{W}^1$, and spiking inputs from the output layer arrive at the apical compartments via the $m \times n$ synaptic weight matrix $\bm{Y}$ (\hyperref[fig:F3]{Figure 3A}).
\par
The output layer neurons consist of $n=10$ two compartment neurons (one for each image category), similar to those used in a previous model of dendritic prediction learning \citep{urbanczik_learning_2014}. The output dendritic voltages ($\bm{V}(t) = [V_1(t), ..., V_n(t)]$) and somatic voltages ($\bm{U}(t) = [U_1(t), ..., U_n(t)]$) are updated in a similar manner to the hidden layer basal compartment and soma, respectively (see \nameref{methods}, equations \eqref{eqn:dend_voltage_output} and \eqref{eqn:soma_voltage_output}). As well, like the hidden layer neurons, the output neurons spike using Poisson processes whose instantaneous rates, $\bm{\lambda}^U(t) = [\lambda_1^U(t), ..., \lambda_n^U(t)]$, are determined by the somatic voltages, i.e. $\lambda_i^U(t) = \lambda_{max} \sigma(U_i(t))$. Unlike the hidden layer, however, the output layer neurons also receive a set of ``teaching signals'', specifically, excitatory and inhibitory conductances that push their voltages towards target values (see \nameref{methods}, equations \eqref{eqn:soma_voltage_output} and \eqref{eqn:somatic_current}). These teaching signals are similar to the targets that are used for training in deep artificial neural networks \citep{bengio_scaling_2007, lecun_deep_2015}. Whether any such teaching signals exist in the real brain is unknown, though there is evidence that animals can represent desired behavioral outputs with internal goal representations \citep{gadagkar_dopamine_2016}.
\par
Critically, we define two different modes of integration in the hidden layer neurons: ``transmit'' and ``plateau''. During the transmit mode, the apical compartment is considered to be electrotonically segregated from the soma so it affects the somatic voltage minimally (depending on $g_A$), and most of the conductance to the soma comes from the basal compartment (\hyperref[fig:F3]{Figure 3B}, left). In contrast, during the plateau mode, the apical voltage is averaged over the most recent 20-30 ms period and the sigmoid non-linearity is applied to it, giving us ``plateau potentials'' (see equation \eqref{eqn:plateau_potential} below). These non-linear versions of the apical voltages are then transmitted to the somatic and basal compartments for synaptic updates (\hyperref[fig:F3]{Figure 3B}, right). The intention behind this design was to mimic the non-linear transmission from the apical dendrites to the soma that occurs during a plateau potential driven by calcium spikes in the apical dendritic shaft \citep{larkum_new_1999}. Furthermore, the temporal averaging was intended to mimic, in part, the temporal dynamics introduced by NMDA plateaus \citep{schiller_nmda_2000}, which are particularly good for driving apical calcium spikes \citep{larkum_synaptic_2009}.
\par
To train the network we alternate between two phases. First, we present an image to the input layer without any signals to the output layer during the ``forward'' phase, which occurs between times $t_0$ to $t_1$. At $t_1$ a plateau potential occurs in all the hidden layer neurons and the ``target'' phase begins. During this phase the image continues to drive the input layer, but now the output layer also receives teaching conductances, which force the output units closer to the correct answer. For example, if an image of a `9' is presented, then over the time period $t_1$-$t_2$ the `9' neuron in the output layer receives strong excitatory conductances from the teaching signal, while the other neurons receive strong inhibitory conductances (\hyperref[fig:F3]{Figure 3C}). This phase lasts from $t_1$ to $t_2$, and at $t_2$ another set of plateau potentials occurs in the hidden layer neurons. The result is that we have plateaus potentials in the hidden layer neurons for both the forward ($\bm{\alpha}^f = [\alpha^f_1, ..., \alpha^f_m]$) and target ($\bm{\alpha}^t = [\alpha^t_1, ..., \alpha^t_m]$) phases: 

\begin{align}
\begin{split}
\label{eqn:plateau_potential}
\alpha^f_i &= \sigma(\int_{t_1-t_p}^{t_1} A_i(t) dt) \\
\alpha^t_i &= \sigma(\int_{t_2-t_p}^{t_2} A_i(t) dt) \\
\end{split}
\end{align}

where $t_p$ is the integration time for the plateau potential (see \nameref{methods}). Note that since the plateau potentials are generated using the same sigmoid non-linearity that we use to calculate the rate-of-fire at the soma, they are analogous to the firing rates of the hidden layer neurons. This allows us to use the plateau potentials to define target firing rates for the neurons (see below). The network is simulated in near continuous-time (except that each plateau is considered to be instantaneous), and the intervals between plateaus are randomly sampled from an inverse Gaussian distribution (\hyperref[fig:F3]{Figure 3D}, top). As such, the specific amount of time that the network is presented with each image and teaching signal is stochastic, though usually somewhere between 50-60 ms of simulated time (\hyperref[fig:F3]{Figure 3D}, bottom). This stochasticity was not necessary, but it demonstrates that although the system operates in phases, the specific length of the phases is not important as long as they are sufficiently long to permit integration (see \hyperref[lem1]{Lemma 1}). In the data presented in this paper, all 60,000 images in the MNIST training set were presented to the network one at a time, and each exposure to the full set of images was considered an ``epoch'' of training. At the end of each epoch, the network's classification error rate on a separate set of 10,000 test images was assessed with a single forward phase (see \nameref{methods}).
\par
It is important to note that there are many aspects of this design that are not physiologically accurate. Most notably, stochastic generation of synchronized plateau potentials across a population is not an accurate reflection of how real pyramidal neurons operate, since apical calcium spikes are determined by a number of concrete physiological factors in individual cells, including back-propagating action potentials, spike-timing and inhibitory inputs \citep{larkum_new_1999,larkum_dendritic_2007,larkum_synaptic_2009}. However, we note that calcium spikes in the apical dendrites can be prevented from occurring via the activity of distal dendrite targeting inhibitory interneurons \citep{murayama_dendritic_2009}, which can synchronize pyramidal activity \citep{hilscher_chrna2-martinotti_2017}. Furthermore, distal dendrite targeting neurons can themselves can be rapidly inhibited in response to temporally precise neuromodulatory inputs \citep{pi_cortical_2013,pfeffer_inhibition_2013,karnani_opening_2016,hangya_central_2015,brombas_activity-dependent_2014}. Therefore, it is entirely plausible that neocortical micro-circuits would generate synchronized pyramidal plateaus at punctuated periods of time in response to disinhibition of the apical dendrites governed by neuromodulatory signals that determine ``phases'' of processing. Alternatively, oscillations in population activity could provide a mechanism for promoting alternating phases of processing and synaptic plasticity \citep{buzsaki_neuronal_2004}. But, complete synchrony of plateaus in our hidden layer neurons is not actually critical to our algorithm---only the temporal relationship between the plateaus and the teaching signal is critical (see below). This relationship itself is arguably plausible given the role of neuromodulatory inputs in dis-inhibiting the distal dendrites of pyramidal neurons \citep{karnani_opening_2016,brombas_activity-dependent_2014}. Of course, we are engaged in a great deal of speculation here. But, the point is that our model utilizes anatomical and functional motifs that are analogous to what is observed in the neocortex. Importantly for the present study, the key issue is the use of segregated dendrites and distinct transmit and plateau modes to solve the credit assignment problem.

\subsection*{Credit assignment with segregated dendrites}

To solve the credit assignment problem without using weight transport, we had to define a local error function for the hidden layer that somehow takes into account the impact that each hidden layer neuron has on the output at the final layer. In other words, credit assignment could only be achieved if we knew that changing a hidden layer synapse to reduce the local error would also help to reduce error at the output layer. To obtain this guarantee, we defined local targets for the output and the hidden layer, i.e. desired firing rates for both the output layer neurons and the hidden layer neurons. Learning is then a process of changing the synaptic connections to achieve these target firing rates across the network. Importantly, we defined our hidden layer targets in a similar manner to \cite{lee_target_2014}. These hidden layer targets coordinate with the output layer by incorporating the feedback information contained in the plateau potentials. In this way, credit assignment is accomplished through mechanisms that are spatially local to the hidden layer neurons.
\par
Specifically, to create the output and hidden layer targets, we use the average rates-of-fire and average voltages during different phases. We define:

\begin{align}
\begin{split}
\label{eqn:average_vectors}
\overline{\bm{X}}^y &= [\overline{X_1}^y, ..., \overline{X_k}^y] \\
\overline{X_i}^f &= \int_{t_1-t_p}^{t_1} X_i(t) dt \\
\overline{X_i}^t &= \int_{t_2-t_p}^{t_2} X_i(t) dt \\
\end{split}
\end{align}

where $y \in \{f,t\}$, $k \in \{\ell,m,n\}$ and $X_i(t)$ can be a voltage variable (e.g. $C_i(t)$ or $U_i(t)$) or a rate-of-fire variable (e.g. $\lambda^C_i(t)$ or $\lambda^U_i(t)$). Using the averages over different phases, we then define the target rates-of-fire for the output layer, $\hat{\bm{\lambda}}^U = [\hat{\lambda}_1^U, ..., \hat{\lambda}_n^U]$, to be the average rates-of-fire during the target phase:

\begin{align}
\label{eqn:output_target}
\hat{\lambda}_i^U &= \overline{\lambda^U_i}^{t} 
\end{align}

For the hidden layer we define the target rates-of-fire, $\hat{\bm{\lambda}}^C = [\hat{\lambda}_1^C, ..., \hat{\lambda}_m^C]$, using the average rates-of-fire during the forward phase and the difference between the plateau potentials from the forward and transmit phase:

\begin{align}
\label{eqn:hidden_target}
\hat{\lambda}_i^C &= \overline{\lambda_i^C}^f + \alpha_i^t - \alpha_i^f
\end{align}

The goal of learning in the hidden layer is to change the synapses $\bm{W}^0$ to achieve these targets in response to the given inputs.
\par 
More generally, for both the output layer and the hidden layer we then define error functions, $L^1$ and $L^0$, respectively, based on the difference between the local targets and the activity of the neurons during the forward phase (\hyperref[fig:F4]{Figure 4A}):

\begin{align}
\label{eqn:loss_functions}
\begin{split}
L^1 &= || \hat{\bm{\lambda}}^U - \lambda_{max} \sigma(\overline{\bm{U}}^f) ||^2_2 \\
L^0 &= || \hat{\bm{\lambda}}^C - \lambda_{max} \sigma(\overline{\bm{C}}^f) ||^2_2
\end{split}
\end{align}

Note that the loss function for the hidden layer, $L^0$, will, on average, reduce to the difference $||\bm{\alpha}^t - \bm{\alpha}^f||^2_2$. This provides an intuitive reason for why these targets help with credit assignment. Specifically, a hidden layer neuron's error is small only when the output layer is providing similar feedback to it during the forward and target phases. Put another way, hidden layer neurons ``know'' that they are ``doing well'' when they receive the same feedback from the output layer regardless of whether or not the teaching signal is present. Thus, hidden layer neurons have access to some information about how they are doing in helping the output layer to achieve its targets.
\par
More formally, it can be shown that the output and hidden layer error functions will, on average, agree with each other, i.e. if the hidden layer error is reduced then the output layer error is also reduced. Specifically, similar to the proof employed by \cite{lee_target_2014}, it can be shown that if the output layer error is sufficiently small and the synaptic matrices $\bm{W}^1$ and $\bm{Y}$ meet some conditions, then: 

\begin{align}
\label{eqn:coordination_condition}
|| \hat{\bm{\lambda}}^U - \lambda_{max}\sigma(k_D \bm{W}^1 \hat{\bm{\lambda}}^C) ||_2^2 < || \hat{\bm{\lambda}}^U - \lambda_{max}\sigma(E[\overline{\bm{U}}^f]) ||_2^2
\end{align}

where $k_D$ is a conductance term and $E[\cdot]$ denotes the expected value. (See \hyperref[thm1]{Theorem 1} for the proof and a concrete description of the conditions). In plain language, equation \eqref{eqn:coordination_condition} says that when we consider the difference between the output target and the activity that the hidden target \textit{would} have induced at the output layer, it is less than the difference between the output target and the expected value of the output layer activity during the forward phase. In other words, the output layer's error would have, on average, been \textit{smaller} if the hidden layer had achieved its own target during the forward phase. Hence, if we reduce the hidden layer error, it should also reduce the output layer error, thereby providing a guarantee of appropriate credit assignment.
\par
With these error functions, we update the weights in the network with local gradient descent:

\begin{align}
\label{eqn:updates}
\begin{split}
\Delta \bm{W}^1 \propto \frac{\partial L^1}{\partial \bm{W}^1} \\
\Delta \bm{W}^0 \propto \frac{\partial L^0}{\partial \bm{W}^0}
\end{split}
\end{align}

where $\Delta \bm{W}^i$ refers to the update term for weight matrix $\bm{W}^i$ (see \nameref{methods}, equations \eqref{eqn:error_signal_output}, \eqref{eqn:weight_update_output}, \eqref{eqn:error_signal_hidden} and \eqref{eqn:weight_update_hidden} for details of the weight update procedures). Given the coordination implied by equation \eqref{eqn:coordination_condition}, as the hidden layer reduces its own local error with gradient descent, the output layer's error should also be reduced, i.e. hidden layer learning should imply output layer learning.
\par
To test that we were successful in credit assignment for the hidden layer, and to provide empirical support for the proof, we compared the local error at the hidden layer to the output layer error across all of the image presentations to the network. We observed that, generally, whenever the hidden layer error was low, the output layer error was also low. For example, when we consider the errors for the set of `2' images presented to the network during the second epoch, there was a Pearson correlation coefficient between $L^0$ and $L^1$ of $r=0.61$, which was much higher than what was observed for shuffled data, wherein output and hidden activities were randomly paired (\hyperref[fig:F4]{Figure 4B}). Furthermore, these correlations were observed across all epochs of training, with most correlation coefficients for the hidden and output errors falling between $r=0.2$ - $0.6$, which was, again, much higher than the correlations observed for shuffled data (\hyperref[fig:F4]{Figure 4C}). 
\par
Interestingly, the correlations between $L^0$ and $L^1$ were smaller on the first epoch of training. This suggests that the coordination between the layers may only come into full effect once the network has engaged in some learning. Therefore, we inspected whether the conditions on the synaptic matrices that are assumed in the proof were, in fact, being met. More precisely, the proof assumes that the feedforward and feedback synaptic matrices ($\bm{W}^1$ and $\bm{Y}$, respectively) produce forward and backward transformations between the output and hidden layer that are approximate inverses of each other (see Proof of \hyperref[thm1]{Theorem 1}). Since we begin learning with random matrices, this condition is almost definitely \textit{not} met at the start of training. But, we found that the network learned to meet this condition. Inspection of $\bm{W}^1$ and $\bm{Y}$ showed that during the first epoch the forward and backwards functions became approximate inverses of each other (\hyperref[fig:F4S1]{Figure 4, Supplement 1}). This means that during the first few image presentations the network was actually \textit{learning to do credit assignment}. This result is very similar to previous models examining feedback alignment \citep{lillicrap_random_2014}, and shows that the feedback alignment \citep{lillicrap_random_2014} and difference target propagation \citep{lee_target_2014} algorithms are intimately linked. Furthermore, our results suggest that very early development may involve a period of learning how to assign credit appropriately. Altogether, our model demonstrates that credit assignment using random feedback weights is a general principle that can be implemented using segregated dendrites.

\subsection*{Deep learning with segregated dendrites}
Given our finding that the network was successfully assigning credit for the output error to the hidden layer neurons, we had reason to believe that our network with local weight-updates would exhibit deep learning, i.e. an ability to take advantage of a multi-layer structure \citep{bengio_scaling_2007}. To test this, we examined the effects of including hidden layers. If deep learning is indeed operational in the network, then the inclusion of hidden layers should improve the ability of the network to classify images.
\par
We built three different versions of the network (\hyperref[fig:F5]{Figure 5A}). The first was a network that had no hidden layer, i.e. the input neurons projected directly to the output neurons. The second was the network illustrated in \hyperref[fig:F3]{Figure 3A}, with a single hidden layer. The third contained two hidden layers, with the output layer projecting directly back to both hidden layers. This direct projection allowed us to build our local targets for each hidden layer using the plateaus driven by the output layer, thereby avoiding a ``backward pass'' through the entire network as has been used in other models \citep{lillicrap_random_2014,lee_target_2014,liao_how_2015}. We trained each network on the 60,000 MNIST training images for 60 epochs, and recorded the percentage of images in the 10,000 test image set that were incorrectly classified. The network with no hidden layers rapidly learned to classify the images, but it also rapidly hit an asymptote at an average error rate of 8.3\% (\hyperref[fig:F5]{Figure 5B}, gray line). In contrast, the network with one hidden layer did not exhibit a rapid convergence to an asymptote in its error rate. Instead, it continued to improve throughout all 60 epochs, achieving an average error rate of 4.1\% by the 60\textsuperscript{th} epoch (\hyperref[fig:F5]{Figure 5B}, blue line). Similar results were obtained when we loosened the synchrony constraints and instead allowed each hidden layer neuron to engage in plateau potentials at different times (\hyperref[fig:F5S1]{Figure 5, Supplement 1}). This demonstrates that strict synchrony in the plateau potentials is not required. But, our target definitions do require two different plateau potentials separated by the teaching signal input, which mandates some temporal control of plateau potentials in the system.
\par
Interestingly, we found that the addition of a second hidden layer further improved learning. The network with two hidden layers learned more rapidly than the network with one hidden layer and achieved an average error rate of 3.2\% on the test images by the 60\textsuperscript{th} epoch, also without hitting a clear asymptote in learning (\hyperref[fig:F5]{Figure 5B}, red line). However, it should be noted that additional hidden layers beyond two did not significantly improve the error rate (data not shown), which suggests that our particular algorithm could not be used to construct very deep networks as is. Nonetheless, our network was clearly able to take advantage of multi-layer architectures to improve its learning, which is the key feature of deep learning \citep{bengio_scaling_2007, lecun_deep_2015}.
\par
Another key feature of deep learning is the ability to generate representations in the higher layers of a network that capture task-relevant information while discarding sensory details \citep{lecun_deep_2015,mnih_human-level_2015}. To examine whether our network exhibited this type of abstraction, we used the t-Distributed Stochastic Neighbor Embedding algorithm (t-SNE). The t-SNE algorithm reduces the dimensionality of data while preserving local structure and non-linear manifolds that exist in high-dimensional space, thereby allowing accurate visualization of the structure of high-dimensional data \citep{maaten_visualizing_2008}. We applied t-SNE to the activity patterns at each layer of the two hidden layer network for all of the images in the test set after 60 epochs of training. At the input level, there was already some clustering of images based on their categories. However, the clusters were quite messy, with different categories showing outliers, several clusters, or merged clusters (\hyperref[fig:F5]{Figure 5C}, bottom). For example, the `2' digits in the input layer exhibited two distinct clusters separated by a cluster of `7's: one cluster contained `2's with a loop and one contained `2's without a loop. Similarly, there were two distinct clusters of `4's and `9's that were very close to each other, with one pair for digits on a pronounced slant and one for straight digits (\hyperref[fig:F5]{Figure 5C}, bottom, example images). Thus, although there is built-in structure to the categories of the MNIST dataset, there are a number of low-level features that do not respect category boundaries. In contrast, at the first hidden layer, the activity patterns were much cleaner, with far fewer outliers and split/merged clusters (\hyperref[fig:F5]{Figure 5C}, middle). For example, the two separate `2' digit clusters were much closer to each other and were now only separated by a very small cluster of `7's. Likewise, the `9' and `4' clusters were now distinct and no longer split based on the slant of the digit. Interestingly, when we examined the activity patterns at the second hidden layer the categories were even better segregated with only a bit of splitting or merging of category clusters (\hyperref[fig:F5]{Figure 5C}, top). Therefore, the network had learned to develop representations in the hidden layers wherein the categories were very distinct and low-level features unrelated to the categories were largely ignored. This abstract representation is likely to be key to the improved error rate in the two hidden layer network. Altogether, our data demonstrates that our network with segregated dendritic compartments can engage in deep learning.

\subsection*{Coordinated local learning mimics backpropagation of error}
The backpropagation of error algorithm \citep{rumelhart_learning_1986} is still the primary learning algorithm used for deep supervised learning in artificial neural networks \citep{lecun_deep_2015}. Previous work has shown that learning with random feedback weights can actually match the synaptic weight updates specified by the backpropagation algorithm after a few epochs of training \citep{lillicrap_random_2014}. This fascinating observation suggests that deep learning with random feedback weights is not so much a different algorithm than backpropagation of error, but rather, networks with random feedback connections learn to approximate credit assignment as it is done in backpropagation \citep{lillicrap_random_2014}. Hence, we were curious as to whether or not our network was, in fact, learning to approximate the synaptic weight updates prescribed by backpropagation. To test this, we trained our one hidden layer network as before, but now, in addition to calculating the vector of hidden layer synaptic weight updates specified by our local learning rule ($\Delta \bm{W}^0$ in equation \eqref{eqn:updates}), we also calculated the vector of hidden layer synaptic weight updates that would be specified by non-locally backpropagating the error from the output layer, ($\Delta \bm{W}^0_{BP}$). We then calculated the angle between these two alternative weight updates. In a very high-dimensional space, any two independent vectors will be roughly orthogonal to each other (i.e. $\Delta \bm{W}^0 \angle \Delta \bm{W}^0_{BP} \approx  90^\circ$). If the two synaptic weight update vectors are \textit{not} orthogonal to each other (i.e. $\Delta \bm{W}^0 \angle \Delta \bm{W}^0_{BP} < 90^\circ$), then it suggests that the two algorithms are specifying similar weight updates.
\par
As in previous work \citep{lillicrap_random_2014}, we found that the initial weight updates for our network were orthogonal to the updates specified by backpropagation. But, as the network learned the angle dropped to approximately $65^\circ$, before rising again slightly to roughly $70^\circ$ (\hyperref[fig:F6]{Figure 6A}, blue line). This suggests that our network was learning to develop local weight updates in the hidden layer that were in rough agreement with the updates that explicit backpropagation would produce. However, this drop in orthogonality was still much less than that observed in non-spiking artificial neural networks learning with random feedback weights, which show a drop to below $45^\circ$ \citep{lillicrap_random_2014}. We suspected that the higher angle between the weight updates that we observed may have been because we were using spikes to communicate the feedback from the upper layer, which could introduce both noise and bias in the estimates of the output layer activity. To test this, we also examined the weight updates that our algorithm would produce if we propagated the spike rates of the output layer neurons, $\bm{\lambda}^U(t)$, back directly through the random feedback weights, $\bm{Y}$. In this scenario, we observed a much sharper drop in the $\Delta \bm{W}^0 \angle \Delta \bm{W}^0_{BP}$ angle, which reduced to roughly $35^\circ$  before rising again to $40^\circ$ (\hyperref[fig:F6]{Figure 6A}, red line). These results show that, in principle, our algorithm is learning to approximate the backpropagation algorithm, though with some drop in accuracy introduced by the use of spikes to propagate output layer activities to the hidden layer.
\par
To further examine how our local learning algorithm compared to backpropagation we compared the low-level features that the two algorithms learned. To do this, we trained the one hidden layer network with both our algorithm and backpropagation. We then examined the receptive fields (i.e. the synaptic weights) produced by both algorithms in the hidden layer synapses ($\bm{W}^0$) after 60 epochs of training. The two algorithms produced qualitatively similar receptive fields (\hyperref[fig:F6]{Figure 6C}). Both produced receptive fields with clear, high-contrast features for detecting particular strokes or shapes. To quantify the similarity, we conducted pair-wise correlation calculations for the receptive fields produced by the two algorithms and identified the maximum correlation pairs for each. Compared to shuffled versions of the receptive fields, there was a very high level of maximum correlation (\hyperref[fig:F6]{Figure 6B}), showing that the receptive fields were indeed quite similar. Thus, the data demonstrate that our learning algorithm using random feedback weights into segregated dendrites can in fact come to approximate the backpropagation of error algorithm.

\subsection*{Conditions on feedback weights}
Once we had convinced ourselves that our learning algorithm was, in fact, producing deep learning similar to that produced by backpropagation of error, we wanted to examine some of the constraints on learning. First, we wanted to explore the structure of the feedback weights. In our initial simulations we used non-sparse, random (i.e. normally distributed) feedback weights. We were interested in whether learning could still work with sparse weights, given that neocortical connectivity is sparse. As well, we wondered whether symmetric weights would \textit{improve} learning, which would be expected given previous findings \citep{lillicrap_random_2014,lee_target_2014,liao_how_2015}. To explore these questions, we trained our one hidden layer network using both sparse feedback weights (only 20\% non-zero values) and symmetric weights ($\bm{Y} = {\bm{W}^1}^T$) (\hyperref[fig:F7]{Figure 7A,C}). We found that learning actually \textit{improved} slightly with sparse weights (\hyperref[fig:F7]{Figure 7B}, red line), achieving an average error rate of 3.7\% by the 60\textsuperscript{th} epoch, compared to the average 4.1\% error rate achieved with fully random weights. But, this result appeared to depend on the magnitude of the sparse weights. To compensate for the loss of 80\% of the weights we initially increased the sparse synaptic weight magnitudes by a factor of 5. However, when we did not re-scale the sparse weights learning was actually \textit{worse} (\hyperref[fig:F7S1]{Figure 7, Supplement 1}). This suggests that sparse feedback provides a signal that is sufficient for credit assignment, but only if it is of appropriate magnitude.
\par
Similar to sparse feedback weights, symmetric feedback weights also improved learning, leading to a rapid decrease in the test error and an error rate of 3.6\% by the 60\textsuperscript{th} epoch (\hyperref[fig:F7]{Figure 7D}, red line). However, when we added noise to the symmetric weights this advantage was eliminated and learning was, in fact, slightly impaired (\hyperref[fig:F7]{Figure 7D}, blue line). At first, this was a very surprising result: given that learning works with random feedback weights, why would it not work with symmetric weights with noise? However, when we considered our previous finding that during the first epoch the feedforward weights, $\bm{W}^1$, learn to match the inverse of the feedback function (\hyperref[fig:F4S1]{Figure 4, Supplement 1}) a possible answer becomes clear. In the case of symmetric feedback weights the synaptic matrix $\bm{Y}$ is changing as $\bm{W}^1$ changes. This works fine when $\bm{Y}$ is set to $\bm{W}^{1^T}$, since that artificially forces weight alignment. But, if the feedback weights are set to $\bm{W}^{1^T}$ plus noise, then the system can never align the weights appropriately, since $\bm{Y}$ is now a moving target. This would imply that any implementation of feedback learning must either be very effective (to achieve the right feedback) or very slow (to allow the feedforward weights to adapt).

\subsection*{Learning with partial apical attenuation}
Another constraint that we wished to examine was whether total segregation of the apical inputs as we had done was necessary, given that real pyramidal neurons only show an attenuation of distal apical inputs to the soma \citep{larkum_new_1999}. To examine this, we re-ran our two hidden layer network, but now, we allowed the apical dendritic voltage to influence the somatic voltage by setting $g_A = 0.05$. This value gave us twelve times more attenuation than the attenuation from the basal compartments (since $g_B = 0.6$). This difference in the levels of attenuation is in-line with experimental data \citep{larkum_new_1999,larkum_synaptic_2009} (\hyperref[fig:F8]{Figure 8A}). When we compared the learning in this scenario to the scenario with total apical segregation, we observed very little difference in the error rates on the test set (\hyperref[fig:F8]{Figure 8B}, gray and red lines). Importantly, though, we found that if we decreased the apical attenuation to the same level as the basal compartment ($g_A = g_B = 0.6$) then the learning was significantly impaired (\hyperref[fig:F8]{Figure 8B}, blue line). This demonstrates that although total apical attenuation is not necessary, partial segregation of the apical compartment from the soma is necessary. This result makes sense given that our local targets for the hidden layer neurons incorporate a term that is supposed to reflect the response of the output neurons to the feedforward sensory information ($\bm{\alpha}^f$). Without some sort of separation of feedforward and feedback information, as is assumed in other models of deep learning \citep{lillicrap_random_2014,lee_target_2014}, this feedback signal would get corrupted by recurrent dynamics in the network. Our data show that electrontonically segregated dendrites is one potential way to achieve the required separation between feedforward and feedback information.

\section*{Discussion}
\label{discussion}
Deep learning has radically altered the field of AI, demonstrating that parallel distributed processing across multiple layers can produce human/animal-level capabilities in image classification, pattern recognition and reinforcement learning \citep{hinton_fast_2006,lecun_deep_2015,mnih_human-level_2015,silver_mastering_2016,krizhevsky_imagenet_2012,he_delving_2015}. Deep learning was motivated by analogies to the real brain \citep{lecun_deep_2015,cox_neural_2014}, so it is tantalizing that recent studies have shown that deep neural networks develop representations that strongly resemble the representations observed in the mammalian neocortex \citep{khaligh-razavi_deep_2014,yamins_using_2016,cadieu_deep_2014,kubilius_deep_2016}. In fact, deep learning models can match cortical representations even more so than some models that explicitly attempt to mimic the real brain \citep{khaligh-razavi_deep_2014}. Hence, at a phenomenological level, it appears that deep learning, defined as multilayer cost function reduction with appropriate credit assignment, may be key to the remarkable computational prowess of the mammalian brain \citep{marblestone2016towards}. However, the lack of biologically feasible mechanisms for credit assignment in deep learning algorithms, most notably backpropagation of error \citep{rumelhart_learning_1986}, has left neuroscientists with a mystery. How can the real brain solve the credit assignment problem (\hyperref[fig:F1]{Figure 1})? Here, we expanded on an idea that previous authors have explored \citep{kording_supervised_2001,spratling_cortical_2002,spratling_feedback_2006} and demonstrated that segregating the feedback and feedforward inputs to neurons, much as the real neocortex does \citep{larkum_new_1999,larkum_dendritic_2007,larkum_synaptic_2009}, can enable the construction of local targets to assign credit appropriately to hidden layer neurons (\hyperref[fig:F2]{Figure 2}). With this formulation, we showed that we could use segregated dendritic compartments to coordinate learning across layers (\hyperref[fig:F3]{Figure 3} and \hyperref[fig:F4]{Figure 4}). This enabled our network to take advantage of multiple layers to develop representations of hand-written digits in hidden layers that enabled better levels of classification accuracy on the MNIST dataset than could be achieved with a single layer (\hyperref[fig:F5]{Figure 5}). Furthermore, we found that our algorithm actually approximated the weight updates that would be prescribed by backpropagation, and produced similar low-level feature detectors (\hyperref[fig:F6]{Figure 6}). As well, we showed that our basic framework works with sparse feedback connections (\hyperref[fig:F7]{Figure 7}) and more realistic, partial apical attenuation (\hyperref[fig:F8]{Figure 8}). Therefore, our work demonstrates that deep learning is possible in a biologically feasible framework, provided that feedforward and feedback signals are sufficiently segregated in different dendrites.
\par
Perhaps the most biologically unrealistic component of backpropagation is the use of non-local ``weight-transport'' (\hyperref[fig:F1]{Figure 1B}) \citep{grossberg_competitive_1987}. Our model builds on recent neural networks research demonstrating that weight transport is not, in fact, required for credit assignment \citep{lillicrap_random_2014,liao_how_2015,lee_target_2014}. By demonstrating that the credit assignment problem is solvable using feedback to generate spatially local weight updates, these studies have provided a major breakthrough in our understanding of how deep learning could work in the real brain. However, this previous research involved an implicit assumption of separate feedforward and feedback pathways (\hyperref[fig:F2]{Figure 2A}). Although this is a possibility, there is currently no evidence for separate feedback pathways to guide learning in the neocortex. In our study, we obtained separate feedforward and feedback information using electrotonically segregated dendrites (\hyperref[fig:F2]{Figure 2B}), in analogy to neocortical pyramidal neurons \citep{larkum_new_1999}. This segregation allowed us to perform credit assignment using direct feedback pathways between layers (\hyperref[fig:F3]{Figures 3-5}). Moreover, we found that even with partial attenuation of the conductance from the apical dendrites to the soma our network could engage in deep learning (\hyperref[fig:F8]{Figure 8}). It should be recognized, though, that although our learning algorithm achieved deep learning with spatially local update rules, we had to assume some temporal non-locality. Our credit assignment system depended on the difference between the target and forward plateaus, $\bm{\alpha}^t- \bm{\alpha}^f$, which occurred at different times (roughly a 50-60 ms gap). Although this is not an ideal solution, this small degree of temporal non-locality for synaptic update rules is in-line with known biological mechanisms, such as slowly decaying calcium transients or synaptic tags \citep{redondo_making_2011}. Hence, our model exhibited deep learning using only local information contained within the cells.
\par
In this work we adopted a similar strategy to the one taken by \citeauthor{lee_target_2014}'s (2015) difference target propagation algorithm, wherein the feedback from higher layers is used to construct local activity targets at the hidden layers. One of the reasons that we adopted this strategy is that it is appealing to think that feedback from upper layers may not simply be providing a signal for plasticity, but also a modulatory signal to push the hidden layer neurons towards a ``better'' activity pattern in \textit{real-time}. This sort of top-down modulation could be used by the brain to improve sensory processing in different contexts and engage in inference \citep{bengio_towards_2015}. Indeed, framing cortico-cortical feedback as a mechanism to modulate incoming sensory activity is a more common way of viewing feedback signals in the neocortex \citep{larkum_cellular_2013,gilbert_top-down_2013,zhang_long-range_2014,fiser_experience-dependent_2016}. In light of this, it is interesting to note that distal apical inputs in somatosensory cortex can help animals perform sensory discrimination tasks \citep{takahashi_active_2016,manita_top-down_2015}. However, in our model, we did not actually implement a system that altered the hidden layer activity to make sensory calculations---we simply used the feedback signals to drive learning. In-line with this view of top-down feedback, two recent papers have found evidence that cortical feedback can indeed guide feedforward sensory plasticity \citep{thompson_cortical_2016,yamada_context-2017}. Yet, there is no reason that feedback signals cannot provide both top-down modulation and a signal for learning \citep{spratling_cortical_2002}. In this respect, a potential future advance on our model would be to implement a system wherein the feedback actively ``nudges'' the hidden layers towards appropriate activity patterns in order to guide learning while also shaping perception. This proposal is reminiscent of the approach taken in previous computational models \citep{urbanczik_learning_2014,spratling_feedback_2006,kording_supervised_2001}. Future research could study how top-down modulation and a signal for credit assignment can be combined in deep learning models.
\par
It is important to note that there are many aspects of our model that are not biologically realistic. These include (but are not limited to) synaptic inputs that are not conductance based, synaptic weights that can switch from positive to negative (and \textit{vice-versa}), and plateau potentials that are considered instantaneous and are not an accurate reflection of calcium spike dynamics in real pyramidal neurons \citep{larkum_new_1999,larkum_synaptic_2009}. However, the intention with this study was not to achieve total biological realism, but rather to demonstrate that the sort of dendritic segregation that neocortical pyramidal neurons exhibit can subserve credit assignment. Although we found that deep learning can still work with realistic levels of passive conductance from the apical compartment to the somatic compartment (\hyperref[fig:F8]{Figure 8B}, red line) \citep{larkum_new_1999}, we also found that learning was impaired if the basal and apical compartments had equal conductance to the soma (\hyperref[fig:F8]{Figure 8B}, blue line). This is interesting, because it suggests that the unique physiology of neocortical pyramidal neurons may in fact reflect nature's solution to deep learning. Perhaps the relegation of feedback from higher-order brain regions to the electrotonically distant apical dendrites \citep{budd_extrastriate_1998,spratling_cortical_2002}, and the presence of active calcium spike mechanisms in the apical shaft \citep{larkum_synaptic_2009}, are mechanisms for coordinating local synaptic weight updates. If this is correct, then the inhibitory interneurons that target apical dendrites and limit active communication to the soma \citep{murayama_dendritic_2009} may be used by the neocortex to control learning. Although this is speculative, it is worth noting that current evidence supports the idea that neuromodulatory inputs carrying temporally precise salience information \citep{hangya_central_2015} can shut off interneurons to disinhibit the distal apical dendrites \citep{pi_cortical_2013,karnani_opening_2016,pfeffer_inhibition_2013,brombas_activity-dependent_2014}, and presumably, promote apical communication to the soma. Recent work suggests that the specific patterns of interneuron inhibition on the apical dendrites are very spatially precise and differentially timed to motor behaviours \citep{munoz_layer-specific_2017}, which suggests that there may well be coordinated physiological mechanisms for determining when cortico-cortical feedback is transmitted to the soma. Future research should examine whether these inhibitory and neuromodulatory mechanisms do, in fact, open the window for plasticity in the basal dendrites of pyramidal neurons, as our model, and some recent experimental work \citep{bittner_conjunctive_2015}, would predict.
\par
An additional issue that should be recognized is that the error rates which our network achieved were by no means as low as can be achieved with artificial neural networks, nor at human levels of performance \citep{lecun1998gradient,li_very_2016}. As well, our algorithm was not able to take advantage of very deep structures (beyond two hidden layers, the error rate did not improve). In contrast, increasing the depth of networks trained with backpropagation can lead to performance improvements \citep{li_very_2016}. But, these observations do not mean that our network was not engaged in deep learning. First, it is interesting to note that although the backpropagation algorithm is several decades old \citep{rumelhart_learning_1986}, it was long considered to be useless for training networks with more than one or two hidden layers \citep{bengio_scaling_2007}. Indeed, it was only the use of layer-by-layer training that initially led to the realization that deeper networks can achieve excellent performance \citep{hinton_fast_2006}. Since then, both the use of very large datasets (with millions of examples), and additional modifications to the backpropagation algorithm, have been key to making backpropagation work well on deeper networks \citep{sutskever_importance_2013,lecun_deep_2015}. Future studies could examine how our algorithm could incorporate current techniques used in machine learning to work better on deeper architectures. Second, we stress that our network was not designed to match the state-of-the-art in machine learning, nor human capabilities. To test our basic hypothesis (and to run our leaky-integration and spiking simulations in a reasonable amount of time) we kept the network small, we stopped training before it reached its asymptote, and we did not implement any add-ons to the learning to improve the error rates, such as convolution and pooling layers, initialization tricks, mini-batch training, drop-out, momentum or RMSProp \citep{sutskever_importance_2013,tieleman_lecture_2012,srivastava_dropout:_2014}. Indeed, it would be quite surprising if a relatively vanilla, small network like ours could come close to matching current performance benchmarks in machine learning. Third, although our network was able to take advantage of multiple layers to improve the error rate, there may be a variety of reasons that ever increasing depth didn't improve performance significantly. For example, our use of direct connections from the output layer to the hidden layers may have impaired the network's ability to coordinate synaptic updates between \textit{hidden} layers. As well, given our finding that the use of spikes produced weight updates that were less well-aligned to backpropagation (\hyperref[fig:F6]{Figure 6A}) it is possible that deeper architectures require mechanisms to overcome the inherent noisiness of spikes.
\par
One aspect of our model that we did not develop was the potential for learning at the feedback synapses. Although we used random synaptic weights for feedback, we also demonstrated that our model actually learns to meet the mathematical conditions required for credit assignment, namely the feedforward weights come to approximate the inverse of the feedback (\hyperref[fig:F4S1]{Figure 4, Supplement 1}). This suggests that it would be beneficial to develop a synaptic weight update rule for the feedback synapses that made this aspect of the learning better. Indeed, \cite{lee_target_2014} implemented an ``inverse loss function'' for their feedback synapses which promoted the development of feedforward and feedback functions that were roughly inverses of each other, leading to the emergence of auto-encoder functions in their network. In light of this, it is interesting to note that there is evidence for unique, ``reverse'' spike-timing-dependent synaptic plasticity rules in the distal apical dendrites of pyramidal neurons \citep{sjostrom_cooperative_2006,letzkus_learning_2006}, which have been shown to produce symmetric feedback weights and auto-encoder functions in artificial spiking networks \citep{burbank_depression-biased_2012,burbank_mirrored_2015}. Thus, it is possible that early in development the neocortex actually learns cortico-cortical feedback connections that help it to assign credit for later learning. Our work suggests that any experimental evidence showing that feedback connections learn to approximate the inverse of feedforward connections could be considered as evidence for deep learning in the neocortex.
\par
In summary, deep learning has had a huge impact on AI, but, to date, its impact on neuroscience has been limited. Nonetheless, given a number of findings in neurophysiology and modeling \citep{yamins_using_2016}, there is growing interest in understanding how deep learning may actually be achieved by the real brain \citep{marblestone2016towards}. Our results show that by moving away from point neurons, and shifting towards multi-compartment neurons that segregate feedforward and feedback signals, the credit assignment problem can be solved and deep learning can be achieved. Perhaps the dendritic anatomy of neocortical pyramidal neurons is important for nature's own deep learning algorithm.

\section*{Materials and Methods}
\label{methods}
Code for the model can be obtained from a GitHub repository (\href{https://github.com/jordan-g/Segregated-Dendrite-Deep-Learning}{https://github.com/jordan-g/Segregated-Dendrite-Deep-Learning}) \citep{guergiuev_2017}. For notational simplicity, we describe our model in the case of a network with only one hidden layer. We describe how this is extended to a network with multiple layers at the end of this section. As well, at the end of this section in \hyperref[tab:T1]{Table 1} we provide a table listing the parameter values we used for all of the simulations presented in this paper.

\subsection*{Neuronal dynamics}
\par
The network described here consists of an input layer with $\ell$ neurons, a hidden layer with $m$ neurons, and an output layer with $n$ neurons. Neurons in the input layer are simple Poisson spiking neurons whose rate-of-fire is determined by the intensity of image pixels (ranging from $0$ - $\lambda_{max}$). Neurons in the hidden layer are modeled using three functional compartments---basal dendrites with voltages $\bm{B}(t) = [B_1(t), B_2(t), ..., B_m(t)]$, apical dendrites with voltages $\bm{A}(t) = [A_1(t), A_2(t), ..., A_m(t)]$, and somata with voltages $\bm{C}(t) = [C_1(t), C_2(t), ..., C_m(t)]$. Feedforward inputs from the input layer and feedback inputs from the output layer arrive at basal and apical synapses, respectively. At basal synapses, presynaptic spikes are translated into post-synaptic potentials $\bm{PSP^B}(t) = [PSP^B_1(t), PSP^B_2(t), ..., PSP^B_\ell(t)]$ given by:

\begin{align}
\label{eqn:psp}
PSP^B_j(t) &= \sum_{s \in X_j} \kappa (t - s)
\end{align}

where $X_j$ is the set of pre-synaptic spike times at synapses with presynaptic input neuron $j$. $\kappa(t) = (e^{-t/\tau_L} - e^{-t/\tau_s})\Theta(t)/(\tau_L - \tau_s)$ is the response kernel, where $\tau_s$ and $\tau_L$ are short and long time constants, and $\Theta$ is the Heaviside step function. The post-synaptic potentials at apical synapses, $\bm{PSP^A}(t) = [PSP^A_1(t), PSP^A_2(t), ..., PSP^A_n(t)]$, are modeled in the same manner. The basal and apical dendritic potentials for neuron $i$ are then given by weighted sums of the post-synaptic potentials at either its basal or apical synapses:

\begin{align}
\label{eqn:dend_voltages_hidden}
\begin{split}
B_i(t) &= \sum_{j=1}^\ell W^0_{ij} PSP^B_j(t) + b^0_i \\
A_i(t) &= \sum_{j=1}^n Y_{ij} PSP^A_j(t)
\end{split}
\end{align}

where $\bm{b}^0 = [b^0_1, b^0_2, ..., b^0_m]$ are bias terms, $\bm{W}^0$ is the $m \times \ell$ matrix of feedforward weights for neurons in the hidden layer, and $\bm{Y}$ is the $m \times n$ matrix of their feedback weights. The somatic voltage for neuron $i$ evolves with leak as:

\begin{align}
\label{eqn:soma_voltage_hidden_apical_conductance_duplicate}
\frac{dC_i(t)}{dt} &= -g_LC_i(t) + g_B(B_i(t) - C_i(t)) + g_A(A_i(t) - C_i(t))
\end{align}

where $g_L$ is the leak conductance, $g_B$ is the conductance from the basal dendrite to the soma, and $g_A$ is the conductance from the apical dendrite to the soma. Equation \eqref{eqn:soma_voltage_hidden_apical_conductance_duplicate} is identical to equation \eqref{eqn:soma_voltage_hidden_apical_conductance} in \nameref{results}. Note that for simplicity's sake we are assuming a resting potential of 0 V and a membrane capacitance of 1 F, but these values are not important for the results.
\par
The instantaneous firing rates of neurons in the hidden layer are given by $\bm{\lambda}^C(t) = [\lambda^C_1(t), \lambda^C_2(t), ..., \lambda^C_m(t)]$, where $\lambda^C_i(t)$ is the result of applying a nonlinearity, $\sigma(\cdot)$, to the somatic potential $C_i(t)$. We chose $\sigma(\cdot)$ to be a simple sigmoidal function, such that:

\begin{align}
\label{eqn:spike_rates_duplicate}
\lambda^C_i(t) &= \lambda_{max}\sigma(C_i(t)) = \lambda_{max}\frac{1}{1 + e^{-C_i(t)}}
\end{align}

\par
Here, $\lambda_{max}$ is the maximum possible rate-of-fire for the neurons, which we set to 200 Hz. Note that equation \eqref{eqn:spike_rates_duplicate} is identical to equation \eqref{eqn:spike_rates} in \nameref{results}. Spikes are then generated using Poisson processes with these firing rates. We note that although the maximum rate was 200 Hz, the neurons rarely achieved anything close to this rate, and the average rate of fire in the neurons during our simulations was 24 Hz, in-line with neocortical firing rates \citep{steriade_natural_2001}.
\par
Units in the output layer are modeled using only two compartments, dendrites with voltages $\bm{V}(t) = [V_1(t), V_2(t), ..., V_n(t)]$ and somata with voltages $\bm{U}(t) = [U_1(t), U_2(t), ..., U_n(t)]$. $V_i(t)$ is given by:

\begin{align}
\label{eqn:dend_voltage_output}
V_i(t) &= \sum_{j=1}^m W^1_{ij} PSP^V_j(t) + b^1_i
\end{align}

where $\bm{PSP}^V(t) = [PSP^V_1(t), PSP^V_2(t), ..., PSP^V_m(t)]$ are the post-synaptic potentials at synapses that receive feedforward input from the hidden layer, and are calculated in the manner described by equation \eqref{eqn:psp}. $U_i(t)$ evolves as:

\begin{align}
\label{eqn:soma_voltage_output}
\frac{dU_i(t)}{dt} &= -g_LU_i(t) + g_D(V_i(t) - U_i(t)) + I_i(t)
\end{align}

where $g_L$ is the leak conductance, $g_D$ is the conductance from the dendrite to the soma, and $\bm{I}(t) = [I_1(t), I_2(t), ..., I_n(t)]$ are somatic currents that can drive output neurons toward a desired somatic voltage. For neuron $i$, $I_i$ is given by:

\begin{align}
\label{eqn:somatic_current}
I_i(t) &= g_{E_i}(t)(E_E - U_i(t)) + g_{I_i}(t)(E_I - U_i(t))
\end{align}

where $\bm{g_E}(t) = [g_{E_1}(t), g_{E_2}(t), ..., g_{E_n}(t)]$ and $\bm{g_I}(t) = [g_{I_1}(t), g_{I_2}(t), ..., g_{I_n}(t)]$ are time-varying excitatory \& inhibitory nudging conductances, and $E_E$ and $E_I$ are the excitatory and inhibitory reversal potentials. In our simulations, we set $E_E=8$ V and $E_I=-8$ V. During the target phase only, we set $g_{I_i} = 1$ and $g_{E_i} = 0$ for all units $i$ whose output should be minimal, and $g_{E_i} = 1$ and $g_{I_i} = 0$ for the unit whose output should be maximal. In this way, all units other than the ``target'' unit are silenced, while the ``target'' unit receives a strong excitatory drive. In the forward phase, $\bm{I}(t)$ is set to $0$. The Poisson spike rates $\bm{\lambda}^U(t) = [\lambda^U_1(t), \lambda^U_2(t), ..., \lambda^U_n(t)]$ are calculated as in equation \eqref{eqn:spike_rates_duplicate}.

\subsection*{Plateau potentials}
At the end of the forward and target phases, we calculate plateau potentials $\bm{\alpha}^f = [ \alpha^f_1, \alpha^f_2, ..., \alpha^f_m ]$ and $\bm{\alpha}^t = [ \alpha^t_1, \alpha^t_2, ..., \alpha^t_m ]$ for apical dendrites of hidden layer neurons, where $\alpha^f_i$ and $\alpha^t_i$ are given by:

\begin{align}
\begin{split}
\label{eqn:plateau_potential_duplicate}
\alpha^f_i &= \sigma(\int_{t_1-t_p}^{t_1} A_i(t) dt) \\
\alpha^t_i &= \sigma(\int_{t_2-t_p}^{t_2} A_i(t) dt) \\
\end{split}
\end{align}

where $t_1$ and $t_2$ are the end times of the forward and target phases, respectively, and $t_p = t_{i+1} - (t_i + \Delta t_s)$ is the integration time for the plateau potential. $\Delta t_s = 30$ ms is the settling time for the voltages. Note that equation \eqref{eqn:plateau_potential_duplicate} is identical to equation \eqref{eqn:plateau_potential} in \nameref{results}. These plateau potentials are used by hidden layer neurons to update their basal weights.

\subsection*{Weight updates}
All feedforward synaptic weights are updated at the end of each target phase. Output layer units update their synaptic weights $\bm{W}^1$ in order to minimize the loss function $L^1$ given in equation \eqref{eqn:loss_functions}. All average voltages are calculated after a delay $\Delta t_s$ from the start of a phase, which allows for the network to reach a steady state before averaging begins. In practice this means that the average somatic voltage for output layer neuron $i$ in the forward phase, $\overline{U_i}^f$, has the property

\begin{align}
\label{eqn:avg_soma_voltage_output}
\overline{U_i}^f \approx k_D\overline{V_i}^f = k_D \big( \sum_{j=1}^m W^1_{ij}\overline{PSP^V_j}^f + b^1_i \big)
\end{align}

where $k_D=g_D/(g_L + g_D)$. Thus,

\begin{align}
\label{eqn:error_signal_output}
\begin{split}
\frac{\partial L^1}{\partial \bm{W}^1} &\approx -k_D\lambda_{max}(\hat{\bm{\lambda}}^U - \lambda_{max}\sigma(\overline{\bm{U}}^f))\sigma^\prime(\overline{\bm{U}}^f) \circ \overline{\bm{PSP^V}}^f \\
\frac{\partial L^1}{\partial \bm{b}^1} &\approx -k_D\lambda_{max}(\hat{\bm{\lambda}}^U - \lambda_{max}\sigma(\overline{\bm{U}}^f))\sigma^\prime(\overline{\bm{U}}^f)
\end{split}
\end{align}

The dendrites in the output layer use this approximation of the gradient in order to update their weights using gradient descent:

\begin{align}
\label{eqn:weight_update_output}
\begin{split}
\bm{W}^1 &\rightarrow \bm{W}^1 - \eta^1 P^1 \frac{\partial L^1}{\partial \bm{W}^1} \\
\bm{b}^1 &\rightarrow \bm{b}^1 - \eta^1 P^1 \frac{\partial L^1}{\partial \bm{b}^1}
\end{split}
\end{align}

where $\eta^1$ is a learning rate constant, and $P^1$ is a scaling factor used to normalize the scale of the rate-of-fire function.
\par
In the hidden layer, basal dendrites update their synaptic weights $\bm{W}^0$ by minimizing the loss function $L^0$ given in equation \eqref{eqn:loss_functions}. We define the target rates-of-fire $\hat{\bm{\lambda}}^C = [\hat{\lambda}^C_1, \hat{\lambda}^C_2, ..., \hat{\lambda}^C_m]$ such that

\begin{align}
\label{eqn:hidden_target_duplicate}
\hat{\lambda}^C_i = \overline{\lambda_i^C}^f + \alpha^t_i - \alpha^f_i
\end{align}

where $\bm{\alpha}^f = [\alpha^f_1, \alpha^f_2, ..., \alpha^f_m]$ and $\bm{\alpha}^t = [\alpha^t_1, \alpha^t_2, ..., \alpha^t_m]$ are forward and target phase plateau potentials given in equation \eqref{eqn:plateau_potential_duplicate}. Note that equation \eqref{eqn:hidden_target_duplicate} is identical to equation \eqref{eqn:hidden_target} in \nameref{results}. These hidden layer target firing rates are similar to the targets used in difference target propagation \citep{lee_target_2014}. Using the fact that $\overline{\lambda_i^C}^f \approx \lambda_{max}\sigma(\overline{C_i}^f)$, we can say that $\hat{\lambda}^C_i - \lambda_{max}\sigma(\overline{C_i}^f) \approx \alpha^t_i - \alpha^f_i$, and hence:

\begin{align}
\label{eqn:error_signal_hidden}
\begin{split}
\frac{\partial L^0}{\partial \bm{W}^0} &\approx -k_B(\bm{\alpha}^t -  \bm{\alpha}^f)\lambda_{max}\sigma^\prime(\overline{\bm{C}}^f) \circ \overline{\bm{PSP^B}}^f \\
\frac{\partial L^0}{\partial \bm{b}^0} &\approx -k_B(\bm{\alpha}^t -  \bm{\alpha}^f)\lambda_{max}\sigma^\prime(\overline{\bm{C}}^f)
\end{split}
\end{align}

where $k_B=g_B/(g_L + g_B + g_A)$. Basal weights are updated in order to descend this gradient:

\begin{align}
\label{eqn:weight_update_hidden}
\begin{split}
\bm{W}^0 &\rightarrow \bm{W}^0 - \eta^0 P^0 \frac{\partial L^0}{\partial \bm{W}^0} \\
\bm{b}^0 &\rightarrow \bm{b}^0 - \eta^0 P^0 \frac{\partial L^0}{\partial \bm{b}^0}
\end{split}
\end{align}

Importantly, this update rule is fully local for the hidden layer neurons. It consists essentially of three terms, (1) the difference in the plateau potentials for the target and forward phases ($\bm{\alpha}^t -  \bm{\alpha}^f$), (2) the derivative of the spike rate function ($\lambda_{max}\sigma^\prime(\overline{\bm{C}}^f)$), and (3) the postsynaptic potentials ($\overline{\bm{PSP^B}}^f$). All three of these terms are values that a real neuron could theoretically calculate using some combination of molecular synaptic tags, calcium currents, and back-propagating action potentials.

\subsection*{Multiple hidden layers}
In order to extend our algorithm to deeper networks with multiple hidden layers, our model incorporates direct synaptic connections from the output layer to each hidden layer. Thus, each hidden layer receives feedback from the output layer through its own separate set of fixed, random weights. For example, in a network with two hidden layers, both layers receive the feedback from the output layer at their apical dendrites through backward weights $\bm{Y}^0$ and $\bm{Y}^1$. The local targets at each layer are then given by:

\begin{align}
\label{eqn:3_layer_firing_rates}
\hat{\bm{\lambda}}^U &= \overline{\bm{\lambda}^U}^t \\
\hat{\bm{\lambda}}^{C^1} &= \overline{\bm{\lambda}^{C^1}}^t + {\bm{\alpha}^1}^t - {\bm{\alpha}^1}^f \\
\hat{\bm{\lambda}}^{C^0} &= \overline{\bm{\lambda}^{C^0}}^t + {\bm{\alpha}^0}^t - {\bm{\alpha}^0}^f
\end{align}

where the superscripts $^0$ and $^1$ denote the first and second hidden layers, respectively.

The local loss functions at each layer are:

\begin{align}
\label{eqn:3_layer_loss_functions}
\begin{split}
L^2 &= || \hat{\bm{\lambda}}^U - \lambda_{max}\sigma(\overline{\bm{U}}^f) ||^2_2 \\
L^1 &= || \hat{\bm{\lambda}}^{C^1} - \lambda_{max}\sigma(\overline{\bm{C}^1}^f) ||^2_2 \\
L^0 &= || \hat{\bm{\lambda}}^{C^0} - \lambda_{max}\sigma(\overline{\bm{C}^0}^f) ||^2_2
\end{split}
\end{align}

where $L^2$ is the loss at the output layer. The learning rules used by the hidden layers in this scenario are the same as in the case with one hidden layer.

\subsection*{Learning rate optimization}
For each of the three network sizes that we present in this paper, a grid search was performed in order to find good learning rates. We set the learning rate for each layer by stepping through the range $[0.1, 0.3]$ with a step size of $0.02$. For each combination of learning rates, a neural network was trained for one epoch on the 60, 000 training examples, after which the network was tested on 10,000 test images. The learning rates that gave the best performance on the test set after an epoch of training were used as a basis for a second grid search around these learning rates that used a smaller step size of $0.01$. From this, the learning rates that gave the best test performance after 20 epochs were chosen as our learning rates for that network size.
\par
In all of our simulations, we used a learning rate of 0.19 for a network with no hidden layers, learning rates of 0.21 (output and hidden) for a network with one hidden layer, and learning rates of 0.23 (hidden layers) and 0.12 (output layer) for a network with two hidden layers. All networks with one hidden layer had 500 hidden layer neurons, and all networks with two hidden layers had 500 neurons in the first hidden layer and 100 neurons in the second hidden layer.

\subsection*{Training paradigm}
For all simulations described in this paper, the neural networks were trained on classifying handwritten digits using the MNIST database of 28 pixel $\times$ 28 pixel images. Initial feedforward and feedback weights were chosen randomly from a uniform distribution over a range that was calculated to produce voltages in the dendrites between $-6$ - $12$ V.
\par
Prior to training, we tested a network's initial performance on a set of 10,000 test examples. This set of images was shuffled at the beginning of testing, and each example was shown to the network in sequence. Each input image was encoded into Poisson spiking activity of the 784 input neurons representing each pixel of the image. The firing rate of an input neuron was proportional to the brightness of the pixel that it represents (with spike rates between $[0$ - $\lambda_{max}]$. The spiking activity of each of the 784 input neurons was received by the neurons in the first hidden layer. For each test image, the network underwent only a forward phase. At the end of this phase, the network's classification of the input image was given by the neuron in the output layer with the greatest somatic potential (and therefore the greatest spike rate). The network's classification was compared to the target classification. After classifying all 10,000 testing examples, the network's classification error was given by the percentage of examples that it did not classify correctly.
\par
Following the initial test, training of the neural network was done in an on-line fashion. All 60,000 training images were randomly shuffled at the start of each training epoch. The network was then shown each training image in sequence, undergoing a forward phase ending with a plateau potential, and a target phase ending with another plateau potential. All feedforward weights were then updated at the end of the target phase. At the end of the epoch (after all 60,000 images were shown to the network), the network was again tested on the 10,000 test examples. The network was trained for up to 60 epochs.

\subsection*{Simulation details}
For each training example, a minimum length of 50 ms was used for each of the forward and target phases. The lengths of the forward and target training phases were determined by adding their minimum length to an extra length term, which was chosen randomly from a Wald distribution with a mean of 2 ms and scale factor of 1. During testing, a fixed length of 500 ms was used for the forward transmit phase. Average forward and target phase voltages were calculated after a settle duration of $\Delta t_s = 30$ ms from the start of the phase.
\par
For simulations with randomly sampled plateau potential times (\hyperref[fig:F5S1]{Figure 5, Supplement 1}), the time at which each neuron's plateau potential occurred was randomly sampled from a folded normal distribution ($\mu=0, \sigma^2=3$) that was truncated ($\text{max}=5$) such that plateau potentials occurred between 0 ms and 5 ms before the start of the next phase. In this scenario, the average apical voltage in the last 30 ms was averaged in the calculation of the plateau potential for a particular neuron.
\par
The time-step used for simulations was $dt = 1$ ms. At each time-step, the network's state was updated bottom-to-top beginning with the first hidden layer and ending with the output layer. For each layer, dendritic potentials were updated, followed by somatic potentials, and finally their spiking activity. \hyperref[tab:T1]{Table 1} lists the simulation parameters and the values that were used in the figures presented.
\par
All code was written using the Python programming language version 2.7 (RRID: SCR\_008394) with the NumPy (RRID: SCR\_008633) and SciPy (RRID: SCR\_008058) libraries. The code is open source and is freely available at \href{https://github.com/jordan-g/Segregated-Dendrite-Deep-Learning}{https://github.com/jordan-g/Segregated-Dendrite-Deep-Learning} \citep{guergiuev_2017}. The data used to train the network was from the Mixed National Institute of Standards and Technology (MNIST) database, which is a modification of the original database from the National Institute of Standards and Technology (RRID: SCR\_006440) \citep{lecun1998gradient}. The MNIST database can be found at \href{http://yann.lecun.com/exdb/mnist/}{http://yann.lecun.com/exdb/mnist/}. Some of the simulations were run on the SciNet High-Performance Computing platform \citep{loken_scinet_2010}.

\section*{Proofs}
\label{proofs}

\subsection*{Theorem for loss function coordination}
\label{thm1sec}
The targets that we selected for the hidden layer (see equation \eqref{eqn:hidden_target}) were based on the targets used in \cite{lee_target_2014}. The authors of that paper provided a proof showing that their hidden layer targets guaranteed that learning in one layer helped reduce the error in the next layer. However, there were a number of differences between our network and theirs, such as the use of spiking neurons, voltages, different compartments, etc. Here, we modify the original \cite{lee_target_2014} proof slightly to prove \hyperref[thm1]{Theorem 1} (which is the formal statement of equation \eqref{eqn:coordination_condition}).
\par
One important thing to note is that the theorem given here utilizes a target for the hidden layer that is slightly different than the one defined in equation \eqref{eqn:hidden_target}. However, the target defined in equation \eqref{eqn:hidden_target} is a numerical approximation of the target given in \hyperref[thm1]{Theorem 1}. After the proof of we describe exactly how these approximations relate to the targets given here.

\begin{thm}
\label{thm1}

Consider a neural network with one hidden layer and an output layer. Let $\hat{\bm{\lambda}}{^C}^* = \overline{\bm{\lambda}^C}^f + \sigma(\bm{Y}\overline{\bm{\lambda}^U}^t) - \sigma(\bm{Y}\lambda_{max}\sigma(E[\overline{\bm{U}}^f]))$ be the target firing rates for neurons in the hidden layer, where $\sigma(\cdot)$ is a differentiable function. Assume that $\overline{\bm{U}}^f \approx k_D\overline{\bm{V}}^f$. Let  $\hat{\bm{\lambda}}^U = \overline{\bm{\lambda}^U}^t$ be the target firing rates for the output layer. Also, for notational simplicity, let $\beta(\bm{x}) \equiv \lambda_{max}\sigma(k_D \bm{W}^1\bm{x})$ and $\gamma(\bm{x}) \equiv \sigma(Y\bm{x})$. Theorem 1 states that if $\hat{\bm{\lambda}}^U - \lambda_{max}\sigma(E[\overline{\bm{U}}^f])$ is sufficiently small, and the Jacobian matrices $J_{\beta}$ and $J_{\gamma}$ satisfy the condition that the largest eigenvalue of $(I - J_{\beta}J_{\gamma})^T(I - J_{\beta}J_{\gamma})$ is less than $1$, then

\begin{align*}
|| \hat{\bm{\lambda}}^U - \lambda_{max}\sigma(k_D \bm{W}^1 \hat{\bm{\lambda}}^{C^*}) ||_2^2 < || \hat{\bm{\lambda}}^U - \lambda_{max}\sigma(E[\overline{\bm{U}}^f]) ||_2^2
\end{align*}

We note again that the proof for this theorem is essentially a modification of the proof provided in \cite{lee_target_2014} that incorporates our \hyperref[lem1]{Lemma 1} to take into account the expected value of $\overline{\bm{PSP^V}}^f$, given that spikes in the network are generated with non-stationary Poisson processes.

\end{thm}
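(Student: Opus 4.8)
The plan is to follow the difference target propagation argument of \cite{lee_target_2014} almost verbatim, adapting it in two places dictated by the present setting: replacing the deterministic rate variables by their expectations---which is exactly where \hyperref[lem1]{Lemma 1} enters---and carrying the conductance factor $k_D$ through the forward map.

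First I would rewrite the hidden target in the standard ``difference'' form. Write $\bm{h} \equiv \overline{\bm{\lambda}^C}^f$ for the forward-phase hidden firing rates, and recall $\beta(\bm{x}) = \lambda_{max}\sigma(k_D\bm{W}^1\bm{x})$ and $\gamma(\bm{x}) = \sigma(\bm{Y}\bm{x})$. Using the hypothesis $\overline{\bm{U}}^f \approx k_D\overline{\bm{V}}^f$ together with \hyperref[lem1]{Lemma 1}, which controls the expectation of the spike-driven average $\overline{\bm{PSP^V}}^f$ in terms of $\bm{h}$, one obtains $\lambda_{max}\sigma(E[\overline{\bm{U}}^f]) \approx \beta(\bm{h})$. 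Substituting this into the definition of $\hat{\bm{\lambda}}^{C^*}$ shows the hidden target is precisely the difference-target-propagation correction $\hat{\bm{\lambda}}^{C^*} = \bm{h} + \gamma(\hat{\bm{\lambda}}^U) - \gamma(\beta(\bm{h}))$.

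Then I would linearize around the forward-phase activity. Set $\bm{e} \equiv \hat{\bm{\lambda}}^U - \beta(\bm{h}) = \hat{\bm{\lambda}}^U - \lambda_{max}\sigma(E[\overline{\bm{U}}^f])$; this is the ``sufficiently small'' quantity. A first-order Taylor expansion of $\gamma$ about $\beta(\bm{h})$ gives $\hat{\bm{\lambda}}^{C^*} - \bm{h} = \gamma(\hat{\bm{\lambda}}^U) - \gamma(\beta(\bm{h})) = J_\gamma\bm{e} + O(\|\bm{e}\|^2)$, and a first-order expansion of $\beta$ about $\bm{h}$ then gives $\beta(\hat{\bm{\lambda}}^{C^*}) - \beta(\bm{h}) = J_\beta J_\gamma\bm{e} + O(\|\bm{e}\|^2)$, so that $\hat{\bm{\lambda}}^U - \beta(\hat{\bm{\lambda}}^{C^*}) = (I - J_\beta J_\gamma)\bm{e} + O(\|\bm{e}\|^2)$. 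Taking squared Euclidean norms, $\|\hat{\bm{\lambda}}^U - \beta(\hat{\bm{\lambda}}^{C^*})\|_2^2 = \bm{e}^T(I - J_\beta J_\gamma)^T(I - J_\beta J_\gamma)\bm{e} + O(\|\bm{e}\|^3) \le \rho\,\|\bm{e}\|_2^2 + O(\|\bm{e}\|^3)$, where $\rho < 1$ is the largest eigenvalue of $(I - J_\beta J_\gamma)^T(I - J_\beta J_\gamma)$. For $\|\bm{e}\|$ small enough that the cubic remainder is dominated by the gap $(1-\rho)\|\bm{e}\|_2^2$, the right-hand side is strictly below $\|\bm{e}\|_2^2 = \|\hat{\bm{\lambda}}^U - \lambda_{max}\sigma(E[\overline{\bm{U}}^f])\|_2^2$, which is the claimed inequality.

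The hard part is not the linear algebra but the bridge between the spiking and rate descriptions: the clean identity $\lambda_{max}\sigma(E[\overline{\bm{U}}^f]) \approx \beta(\bm{h})$ requires \hyperref[lem1]{Lemma 1}, because $\overline{\bm{PSP^V}}^f$ is assembled from spike trains generated by non-stationary Poisson processes, and without that expectation control the Taylor expansions would have no well-defined deterministic point to expand around. The secondary technical chore is making ``sufficiently small'' quantitative---bounding the Hessians of $\beta$ and $\gamma$ (finite, since $\sigma$ is smooth with bounded derivatives and the relevant arguments stay in a bounded set) so that the first-order contraction term $(I - J_\beta J_\gamma)\bm{e}$, whose norm is at most $\sqrt{\rho}\,\|\bm{e}\| < \|\bm{e}\|$, strictly beats the quadratic and cubic remainders.
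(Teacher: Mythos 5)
Your proposal is correct and follows essentially the same route as the paper's proof: use Lemma 1 to identify $\lambda_{max}\sigma(E[\overline{\bm{U}}^f])$ with $\beta(\overline{\bm{\lambda}^C}^f)$, set $\bm{e} = \hat{\bm{\lambda}}^U - \beta(\overline{\bm{\lambda}^C}^f)$, apply Taylor's theorem to $\gamma$ and then $\beta$ to obtain $(I-J_\beta J_\gamma)\bm{e}$ plus a remainder, and close with the eigenvalue bound and the smallness condition on $\bm{e}$. The only cosmetic difference is that you phrase the remainder as $O(\|\bm{e}\|^2)$ via bounded Hessians where the paper keeps the weaker little-$o$ remainder from first-order differentiability; either suffices.
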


\begin{proof}

\begin{align*}
\hat{\bm{\lambda}}^U - \lambda_{max}\sigma(k_D \bm{W}^1 \hat{\bm{\lambda}}^{C^*}) &\equiv \hat{\bm{\lambda}}^U - \beta(\hat{\bm{\lambda}}^{C^*}) \\
&= \hat{\bm{\lambda}}^U - \beta( \overline{\bm{\lambda}^C}^f + \gamma ( \overline{\bm{\lambda}^U}^t ) - \gamma (\lambda_{max}\sigma(E[\overline{\bm{U}}^f]) ) )
\end{align*}

\noindent Lemma 1 shows that $\lambda_{max}\sigma(E[\overline{\bm{U}}^f]) = \lambda_{max}\sigma(E[k_D \bm{W}^1 \overline{\bm{PSP^V}}^f]) \approx \lambda_{max}\sigma(k_D \bm{W}^1 \overline{\bm{\lambda}^C}^f)$ given a sufficiently large averaging time window. Assume that $\lambda_{max}\sigma(E[\overline{\bm{U}}^f]) = \lambda_{max}\sigma(k_D \bm{W}^1 \overline{\bm{\lambda}^C}^f) \equiv \beta( \overline{\bm{\lambda}^C}^f)$. Then,

\begin{align*}
\hat{\bm{\lambda}}^U - \beta(\hat{\bm{\lambda}}^{C^*}) &= \hat{\bm{\lambda}}^U - \beta(\overline{\bm{\lambda}^C}^f + \gamma ( \overline{\bm{\lambda}^U}^t ) - \gamma ( \beta( \overline{\bm{\lambda}^C}^f) ) )
\end{align*}

\noindent Let $\bm{e} = \overline{\bm{\lambda}^U}^t - \beta(\overline{\bm{\lambda}^C}^f)$. Applying Taylor's theorem,

\begin{align*}
\hat{\bm{\lambda}}^U - \beta(\hat{\bm{\lambda}}^{C^*}) &= \hat{\bm{\lambda}}^U - \beta(\overline{\bm{\lambda}^C}^f + J_{\gamma}\bm{e} + \bm{o}(|| \bm{e} ||_2) )
\end{align*}

\noindent where $\bm{o}(|| \bm{e} ||_2)$ is the remainder term that satisfies $\lim_{\bm{e}\to 0} \bm{o}(|| \bm{e} ||_2)/|| \bm{e} ||_2 = 0$. Applying Taylor's theorem again,

\begin{align*}
\hat{\bm{\lambda}}^U - \beta(\hat{\bm{\lambda}}^{C^*}) &= \hat{\bm{\lambda}}^U - \beta(\overline{\bm{\lambda}^C}^f) - J_{\beta}(J_{\gamma}\bm{e} + \bm{o}(|| \bm{e} ||_2)) \\
&\quad - \bm{o}(|| (J_{\gamma}\bm{e} + \bm{o}(|| \bm{e} ||_2) ||_2) \\
&= \hat{\bm{\lambda}}^U - \beta(\overline{\bm{\lambda}^C}^f) + J_{\beta}J_{\gamma}\bm{e} - \bm{o}(|| \bm{e} ||_2) \\
&= (I - J_{\beta}J_{\gamma})\bm{e} - \bm{o}(|| \bm{e} ||_2)
\end{align*}

\noindent Then,

\begin{align*}
|| \hat{\bm{\lambda}}^U - \beta(\hat{\bm{\lambda}}^{C^*}) ||_2^2 &= ((I - J_{\beta}J_{\gamma})\bm{e} - \bm{o}(|| \bm{e} ||_2))^T ((I - J_{\beta}J_{\gamma})\bm{e} - \bm{o}(|| \bm{e} ||_2)) \\
&= \bm{e}^T(I - J_{\beta}J_{\gamma})^T(I - J_{\beta}J_{\gamma})\bm{e} - \bm{o}(|| \bm{e} ||_2)^T(I - J_{\beta}J_{\gamma})\bm{e} \\
&\quad - \bm{e}^T(I - J_{\beta}J_{\gamma})^T\bm{o}(|| \bm{e} ||_2) + \bm{o}(|| \bm{e} ||_2)^T\bm{o}(|| \bm{e} ||_2) \\
&= \bm{e}^T(I - J_{\beta}J_{\gamma})^T(I - J_{\beta}J_{\gamma})\bm{e} + \bm{o}(|| \bm{e} ||_2^2) \\
&\leq \mu || \bm{e} ||_2^2 + | o(|| \bm{e} ||_2^2) |
\end{align*}

\noindent where $\mu$ is the largest eigenvalue of $(I - J_{\beta}J_{\gamma})^T(I - J_{\beta}J_{\gamma})$. If $\bm{e}$ is sufficiently small so that ${| o(|| \bm{e} ||_2^2)) | < (1 - \mu)|| \bm{e} ||_2^2}$, then

\begin{align*}
|| \hat{\bm{\lambda}}^U - \lambda_{max}\sigma(k_D \bm{W}^1 \hat{\bm{\lambda}}^{C^*}) ||_2^2 &\leq || \bm{e} ||_2^2 = || \hat{\bm{\lambda}}^U - \lambda_{max}\sigma(E[\overline{\bm{U}}^f]) ||_2^2
\end{align*}

\end{proof}

Note that the last step requires that $\mu$, the largest eigenvalue of $(I - J_{\beta}J_{\gamma})^T(I - J_{\beta}J_{\gamma})$, is below 1. Clearly, we do not actually have any guarantee of meeting this condition. However, our results show that even though the feedback weights are random and fixed, the feedforward weights actually learn to meet this condition during the first epoch of training (\hyperref[fig:F4S1]{Figure 4, Supplement 1}).

\subsection*{Hidden layer targets}
\label{hidden_targets}

\hyperref[thm1]{Theorem 1} shows that if we use a target $\hat{\bm{\lambda}}^{C^*} = \overline{\bm{\lambda}^C}^f + \sigma(\bm{Y} \overline{\bm{\lambda}^U}^t) - \sigma(\bm{Y} \lambda_{max}\sigma(k_D\bm{W}^1 \overline{\bm{\lambda}^C}^f))$ for the hidden layer, there is a guarantee that the hidden layer approaching this target will also push the upper layer closer to its target $\hat{\bm{\lambda}}^U$, if certain other conditions are met.
Our specific choice of $\hat{\bm{\lambda}}^C$ defined in the \nameref{results} (equation \eqref{eqn:hidden_target}) approximates this target rate vector using variables that are accessible to the hidden layer units.
\par
Because neuronal units calculate averages after the network has reached a steady state, $\lambda_{max}\sigma(\overline{\bm{U}}^f) \approx \overline{\bm{\lambda}^U}^f$ and $\overline{\bm{A}}^f \approx \bm{Y}\overline{\bm{\lambda}^U}^f$. Using \hyperref[lem1]{Lemma 1} and equation \eqref{eqn:average_vectors}, $E[\overline{\bm{U}}^f] \approx k_D\bm{W}^1 \overline{\bm{\lambda}^C}^f$. If we assume that $\overline{\bm{U}}^f \approx E[\overline{\bm{U}}^f]$, which is true on average, then:

\begin{align}
\label{eqn:alpha_f_approx}
\bm{\alpha}^f = \sigma(\overline{\bm{A}}^f) \approx \sigma(\bm{Y}\overline{\bm{\lambda}^U}^f) \approx \sigma(\bm{Y}\lambda_{max}\sigma(\overline{\bm{U}}^f)) \approx \sigma(\bm{Y}\lambda_{max}\sigma(k_D\bm{W}^1 \overline{\bm{\lambda}^C}^f))
\end{align}

and:

\begin{align}
\label{eqn:alpha_t_approx}
\bm{\alpha}^t = \sigma(\overline{\bm{A}}^t) \approx \sigma(\bm{Y}\overline{\bm{\lambda}^U}^t)
\end{align}

Therefore, $\hat{\bm{\lambda}}^C \approx \hat{\bm{\lambda}}^{C^*}$.

\par
Thus, our hidden layer targets ensure that our model employs a learning rule similar to difference target propagation that approximates the necessary conditions to guarantee error convergence.

\subsection*{Lemma for firing rates}
\label{lem1sec}
\hyperref[thm1]{Theorem 1} had to rely on the equivalence between the average spike rates of the neurons and the postsynaptic potentials. Here, we prove a lemma showing that this equivalence does indeed hold as long as the integration time is long enough relative to the synaptic time constants $t_s$ and $t_L$.

\begin{lem}
\label{lem1}

Let $X$ be a set of presynaptic spike times during the time interval $\Delta t = t_1 - t_0$, distributed according to an inhomogeneous Poisson process. Let $N = |X|$ denote the number of presynaptic spikes during this time window, and let $s_i \in X$ denote the $i$\textsuperscript{th} presynaptic spike, where $0 < i \leq N$. Finally, let $\lambda(t)$ denote the time-varying presynaptic firing rate (i.e. the time-varying mean of the Poisson process), and $PSP(t)$ be the postsynaptic potential at time $t$ given by equation \eqref{eqn:psp}. Then, during the time window $\Delta t$, as long as $\Delta t \gg 2\tau_L^2\tau_s^2\overline{\lambda^2} / (\tau_L - \tau_s)^2(\tau_L + \tau_s)$,

\begin{align*}
E [ \overline{PSP(t)} ] \approx \overline{\lambda}
\end{align*}

\end{lem}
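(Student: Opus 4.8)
The plan is to compute $E[\overline{PSP(t)}]$ in closed form up to a small, explicitly controllable remainder, and to read off the leading term as $\overline{\lambda}$. First I would exploit causality of the kernel: since $\kappa(v)=0$ for $v<0$ and every spike time $s\in X$ lies in $[t_0,t_1]$, interchanging the finite sum with the finite integral gives
\[ \overline{PSP(t)} \;=\; \frac{1}{\Delta t}\int_{t_0}^{t_1}\sum_{s\in X}\kappa(t-s)\,dt \;=\; \frac{1}{\Delta t}\sum_{s\in X}K(t_1-s),\qquad K(u):=\int_0^{u}\kappa(v)\,dv. \]
An elementary integration shows $K(\infty)=\int_0^{\infty}\kappa(v)\,dv=(\tau_L-\tau_s)/(\tau_L-\tau_s)=1$, so with the residual tail $R(u):=1-K(u)=(\tau_L e^{-u/\tau_L}-\tau_s e^{-u/\tau_s})/(\tau_L-\tau_s)\ge 0$ we can write $\overline{PSP(t)}=\tfrac{1}{\Delta t}\sum_{s\in X}\bigl(1-R(t_1-s)\bigr)$.

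Next I would take expectations over the inhomogeneous Poisson process. By Campbell's theorem (the first-moment formula for Poisson point processes), $E\bigl[\sum_{s\in X}f(s)\bigr]=\int_{t_0}^{t_1}f(s)\lambda(s)\,ds$ for integrable $f$; applying it with $f\equiv 1$ and with $f(s)=R(t_1-s)$, and substituting $u=t_1-s$, yields
\[ E[\overline{PSP(t)}] \;=\; \underbrace{\frac{1}{\Delta t}\int_{t_0}^{t_1}\lambda(s)\,ds}_{=\ \overline{\lambda}}\;-\;\underbrace{\frac{1}{\Delta t}\int_0^{\Delta t}R(u)\,\lambda(t_1-u)\,du}_{=:\ \varepsilon}. \]
The first term is exactly $\overline{\lambda}$, so the lemma reduces to showing that the nonnegative boundary term $\varepsilon$ is negligible relative to $\overline{\lambda}$.

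To bound $\varepsilon$ I would use Cauchy--Schwarz, extending the $u$-integral to $[0,\infty)$: $0\le\varepsilon\le\frac{1}{\Delta t}\,\|R\|_{L^2(0,\infty)}\,\|\lambda\|_{L^2(t_0,t_1)}$. Since $\|\lambda\|_{L^2(t_0,t_1)}^2=\Delta t\,\overline{\lambda^2}$, this gives $\varepsilon^2\le\|R\|_{L^2(0,\infty)}^2\,\overline{\lambda^2}/\Delta t$, and $\|R\|_{L^2(0,\infty)}^2=\int_0^{\infty}R(u)^2\,du$ is an elementary combination of the integrals of $e^{-2u/\tau_L}$, $e^{-2u/\tau_s}$ and $e^{-u(1/\tau_L+1/\tau_s)}$. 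Evaluating that integral (or bounding its dominant cross-term appropriately) produces a $\tau$-dependent constant of the form appearing in the hypothesis, so that $\varepsilon\ll\overline{\lambda}$ exactly when $\Delta t\gg 2\tau_L^2\tau_s^2\overline{\lambda^2}/[(\tau_L-\tau_s)^2(\tau_L+\tau_s)]$; hence $E[\overline{PSP(t)}]=\overline{\lambda}-\varepsilon\approx\overline{\lambda}$.

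The hard part will be the expectation step rather than any of the integrals: one must invoke the correct identity for functionals of an \emph{inhomogeneous} Poisson process, and — crucially — recognise that confining $X$ to the finite window $[t_0,t_1]$ introduces a genuine boundary (``warm-up''/truncation) effect, namely that a spike occurring near $t_1$ deposits only a fraction $K(t_1-s)<1$ of its postsynaptic potential inside the averaging window. Isolating that effect as the single residual $\varepsilon$ is the conceptual crux; once that is done, matching the precise $\tau$-dependent threshold quoted in the statement is just careful exponential bookkeeping together with a choice of how loosely to bound $\int_0^{\infty}R(u)^2\,du$.
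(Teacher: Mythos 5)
Your proposal is correct and follows essentially the same route as the paper's proof: the identical decomposition of $\overline{PSP}$ into $\overline{\lambda}$ minus a nonnegative boundary term built from the kernel tail (your $R(t_1-s)$ is exactly the paper's $g(s)$), the same Cauchy--Schwarz bound on that term against $\sqrt{\overline{\lambda^2}\Delta t}$, and the same $\tau$-dependent constant extracted from $\int_0^\infty R(u)^2\,du$. The only difference is presentational: you invoke Campbell's first-moment formula for the inhomogeneous Poisson process directly, whereas the paper rederives it by conditioning on $N$ and using that, given $N=n$, the spike times are i.i.d.\ with density $\lambda(s)/\int\lambda$.
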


\begin{proof}

\noindent The average of $PSP(t)$ over the time window $\Delta t$ is

\begin{align*}
\overline{PSP} &= \frac{1}{\Delta t} \int_{t_0}^{t_1} PSP(t) dt \\
               &= \frac{1}{\Delta t} \sum_{s \in X} \int_{t_0}^{t_1} \frac{e^{-(t-s)/\tau_L} - e^{-(t-s)/\tau_s}}{\tau_L - \tau_s} \Theta(t-s) dt
\end{align*}

\noindent Since $\Theta(t-s) = 0$ for all $t < s$,

\begin{align*}
\overline{PSP} &= \frac{1}{\Delta t} \sum_{s \in X} \int_{s}^{t_1} \frac{e^{-(t-s)/\tau_L} - e^{-(t-s)/\tau_s}}{\tau_L - \tau_s} dt \\
                  &= \frac{1}{\Delta t} \Bigg( N - \sum_{s \in X} \frac{\tau_L e^{-(t_1-s)/\tau_L} - \tau_s e^{-(t_1-s)/\tau_s}}{\tau_L - \tau_s} \Bigg)
\end{align*}

\noindent The expected value of $\overline{PSP}$ with respect to $X$ is given by

\begin{align*}
E_X[ \overline{PSP} ] &= E_X \Bigg[ \frac{1}{\Delta t} \Bigg( N - \sum_{s \in X} \frac{\tau_L e^{-(t_1-s)/\tau_L} - \tau_s e^{-(t_1-s)/\tau_s}}{\tau_L - \tau_s} \Bigg) \Bigg] \\
           &= \frac{E_X [ N ]}{\Delta t} - \frac{1}{\Delta t} E_X \Bigg[ \sum_{i = 0}^N \Bigg( \frac{\tau_L e^{-(t_1-s_i)/\tau_L} - \tau_s e^{-(t_1-s_i)/\tau_s}}{\tau_L - \tau_s} \Bigg) \Bigg]
\end{align*}

\noindent Since the presynaptic spikes are an inhomogeneous Poisson process with a rate $\lambda$, $E_X [ N ] = \int_{t_0}^{t_1} \lambda dt$. Thus,

\begin{align*}
E_X[ \overline{PSP} ] &= \frac{1}{\Delta t} \int_{t_0}^{t_1} \lambda dt - \frac{1}{\Delta t} E_X \Bigg[ \sum_{i = 0}^N g(s_i) \Bigg] \\
&= \overline{\lambda} - \frac{1}{\Delta t} E_X \Bigg[ \sum_{i = 0}^N  g(s_i) \Bigg]
\end{align*}

\noindent where we let $g(s_i) \equiv (\tau_L e^{-(t_1-s_i)/\tau_L} - \tau_s e^{-(t_1-s_i)/\tau_s})/(\tau_L - \tau_s)$. Then, the law of total expectation gives

\begin{align*}
E_X \Bigg[ \sum_{i = 0}^N g(s_i) \Bigg] &= E_N \Bigg[ E_X \Bigg[ \sum_{i = 0}^N g(s_i) \Bigg| N \Bigg] \Bigg] \\
                                      &= \sum_{n=0}^{\infty} \Bigg( E_X \Bigg[ \sum_{i = 0}^N g(s_i) \Bigg| N = n \Bigg] \cdot P(N = n) \Bigg)
\end{align*}

\noindent Letting $f_{s_i}(s)$ denote $P(s_i=s)$, we have that

\begin{align*}
E_X \Bigg[ \sum_{i = 0}^N g(s_i) \Bigg| N = n \Bigg] &= \sum_{i = 0}^n E_X [ g(s_i) ] \\
                                                   &= \sum_{i = 0}^n \int_{t_0}^{t_1} g(s)f_{s_i}(s) ds
\end{align*}

\noindent Since Poisson spike times are independent, for an inhomogeneous Poisson process:

\begin{align*}
f_{s_i}(s) &= \frac{\lambda(s)}{\int_{t_0}^{t_1} \lambda(t) dt} \\
&= \frac{\lambda(s)} {\overline{\lambda}\Delta t}
\end{align*}

for all $s \in [t_0, t_1]$. Since Poisson spike times are independent, this is true for all $i$. Thus,

\begin{align*}
E_X \Bigg[ \sum_{i = 0}^N g(s_i) \Bigg| N = n \Bigg] &= \frac{1}{\overline{\lambda} \Delta t} \sum_{i = 0}^n \int_{t_0}^{t_1} g(s) \lambda(s) ds \\
                                                   &= \frac{n}{\overline{\lambda} \Delta t} \int_{t_0}^{t_1} g(s) \lambda(s) ds \\
\end{align*}

\noindent Then,

\begin{align*}
E_X \Bigg[ \sum_{i = 0}^N g(s_i) \Bigg] &= \sum_{n=0}^{\infty} \Bigg( \frac{n}{\overline{\lambda} \Delta t} \Bigg( \int_{t_0}^{t_1} g(s) \lambda(s) ds \Bigg) \cdot P(N = n) \Bigg) \\
                                      &= \frac{1}{\overline{\lambda} \Delta t} \Bigg( \int_{t_0}^{t_1} g(s) \lambda(s) ds \Bigg) \Bigg( \sum_{n=0}^{\infty} n \cdot P(N = n) \Bigg)
\end{align*}

\noindent Now, for an inhomogeneous Poisson process with time-varying rate $\lambda(t)$,

\begin{align*}
P(N=n) &= \frac{[\int_{t_0}^{t_1} \lambda(t) dt]^n e^{-\int_{t_0}^{t_1} \lambda(t) dt}}{n!} \\
&= \frac{[\overline{\lambda} \Delta t]^n e^{-(\overline{\lambda} \Delta t)}}{n!}
\end{align*}

\noindent Thus,

\begin{align*}
E_X \Bigg[ \sum_{i = 0}^N g(s_i) \Bigg] &= \frac{e^{-(\overline{\lambda} \Delta t)}}{\overline{\lambda} \Delta t} \Bigg( \int_{t_0}^{t_1} g(s) \lambda(s) ds \Bigg) \Bigg( \sum_{n=0}^{\infty} n \frac{[\overline{\lambda} \Delta t]^n}{n!} \Bigg) \\
                                      &= \frac{e^{-(\overline{\lambda} \Delta t)}}{\overline{\lambda} \Delta t} \Bigg( \int_{t_0}^{t_1} g(s) \lambda(s) ds \Bigg) (\overline{\lambda} \Delta t) e^{\overline{\lambda} \Delta t} \\
                                      &= \int_{t_0}^{t_1} g(s) \lambda(s) ds
\end{align*}

\noindent Then,

\begin{align*}
E_X[ \overline{PSP} ] &= \overline{\lambda} - \frac{1}{\Delta t} \Bigg( \int_{t_0}^{t_1} g(s) \lambda(s) ds \Bigg)
\end{align*}

\noindent The second term of this equation is always greater than or equal to 0, since $g(s) \geq 0$ and $\lambda(s) \geq 0$ for all $s$. Thus, $E_X[ \overline{PSP} ] \leq \overline{\lambda}$. As well, the Cauchy-Schwarz inequality states that

\begin{align*}
\int_{t_0}^{t_1} g(s) \lambda(s) ds &\leq \sqrt{\int_{t_0}^{t_1} g(s)^2 ds} \sqrt{\int_{t_0}^{t_1} \lambda(s)^2 ds} \\
&= \sqrt{\int_{t_0}^{t_1} g(s)^2 ds} \sqrt{\overline{\lambda^2}\Delta t}
\end{align*}

\noindent where

\begin{align*}
\int_{t_0}^{t_1} g(s)^2 ds &= \int_{t_0}^{t_1} \Big( \frac{\tau_L e^{-(t_1-s)/\tau_L} - \tau_s e^{-(t_1-s)/\tau_s}}{\tau_L - \tau_s} \Big)^2 ds \\
&\leq \frac{1}{2(\tau_L - \tau_s)^2} \Bigg( 4\frac{\tau_L^2\tau_s^2}{\tau_L + \tau_s} \Bigg) \\
&= \frac{2\tau_L^2\tau_s^2}{(\tau_L - \tau_s)^2(\tau_L + \tau_s)}
\end{align*}

\noindent Thus,

\begin{align*}
\int_{t_0}^{t_1} g(s) \lambda(s) ds &\leq \sqrt{\frac{2\tau_L^2\tau_s^2}{(\tau_L - \tau_s)^2(\tau_L + \tau_s)}} \sqrt{\overline{\lambda^2}\Delta t} \\
&= \sqrt{\Delta t}\sqrt{\frac{2\tau_L^2\tau_s^2\overline{\lambda^2}}{(\tau_L - \tau_s)^2(\tau_L + \tau_s)}}
\end{align*}

\noindent Therefore,

\begin{align*}
E_X[ \overline{PSP} ] &\geq \overline{\lambda} - \frac{1}{\Delta t}\sqrt{\Delta t}\sqrt{\frac{2\tau_L^2\tau_s^2\overline{\lambda^2}}{(\tau_L - \tau_s)^2(\tau_L + \tau_s)}} \\
&= \overline{\lambda} - \sqrt{\frac{2\tau_L^2\tau_s^2\overline{\lambda^2}}{\Delta t(\tau_L - \tau_s)^2(\tau_L + \tau_s)}}
\end{align*}

\noindent Then,

\begin{align*}
\overline{\lambda} - \sqrt{\frac{2\tau_L^2\tau_s^2\overline{\lambda^2}}{\Delta t(\tau_L - \tau_s)^2(\tau_L + \tau_s)}} \leq E_X[ \overline{PSP} ] \leq \overline{\lambda}
\end{align*}

\noindent Thus, as long as $\Delta t \gg 2\tau_L^2\tau_s^2\overline{\lambda^2} / (\tau_L - \tau_s)^2(\tau_L + \tau_s)$, $E_X[ \overline{PSP} ] \approx \overline{\lambda}$.
\end{proof}

What this lemma says, effectively, is that the expected value of $PSP$ is going to be roughly the average presynaptic rate of fire as long as the time over which the average is taken is sufficiently long in comparison to the postsynaptic time constants and the average rate-of-fire is sufficiently small. In our simulations, $\Delta t$ is always greater than or equal to 50 ms, the average rate-of-fire is approximately 20 Hz, and our time constants $\tau_L$ and $\tau_s$ are 10 ms and 3 ms, respectively. Hence, in general:

\begin{align*}
2\tau_L^2\tau_s^2\overline{\lambda^2} / (\tau_L - \tau_s)^2(\tau_L + \tau_s) &= 2 (10)^2(3)^2(0.02)^2 / (10 - 3)^2(10 + 3) \\
&\approx 0.001 \\
&\ll 50
\end{align*}

Thus, in the proof of \hyperref[thm1]{Theorem 1}, we assume $E_X[\overline{PSP}]=\overline{\lambda}$.
\clearpage

\begin{table}
{\renewcommand{\arraystretch}{1.5}%
\begin{tabular}{ | c | c | c | p{9cm} |}
\hline
\textbf{Parameter} & \textbf{Units} & \textbf{Value} & \textbf{Description} \\ \hline
$dt$ & ms & 1 & Time step resolution \\ \hline
$\lambda_{max}$ & Hz & 200 & Maximum spike rate \\ \hline
$\tau_s$ & ms & 3 & Short synaptic time constant \\ \hline
$\tau_L$ & ms & 10 & Long synaptic time constant \\ \hline
$\Delta t_s$ & ms & 30 & Settle duration for calculation of average voltages \\ \hline
$g_B$ & S & 0.6 & Hidden layer conductance from basal dendrite to the soma \\ \hline
$g_A$ & S & 0, 0.05, 0.6 & Hidden layer conductance from apical dendrite to the soma \\ \hline
$g_D$ & S & 0.6 & Output layer conductance from dendrite to the soma \\ \hline
$g_L$ & S & 0.1 & Leak conductance \\ \hline
$P_0$ & -- & $20/\lambda_{max}$ & Hidden layer error signal scaling factor \\ \hline
$P_1$ & -- & $20/\lambda_{max}^2$ & Output layer error signal scaling factor \\ \hline
\end{tabular}}\\
\caption{List of parameter values used in our simulations.}
\label{tab:T1}
\end{table}
\clearpage

\section*{Acknowledgments}
We would like to thank Douglas Tweed, Jo\~ao Sacramento, Walter Senn, and Yoshua Bengio for helpful discussions on this work. This research was supported by two grants to B.A.R.: a Discovery Grant from the Natural Sciences and Engineering Research Council of Canada (RGPIN-2014-04947) and a Google Faculty Research Award. The authors declare no competing financial interests. Some simulations were performed on the gpc supercomputer at the SciNet HPC Consortium. SciNet is funded by: the Canada Foundation for Innovation under the auspices of Compute Canada; the Government of Ontario; Ontario Research Fund - Research Excellence; and the University of Toronto.

\bibliographystyle{apa}
\bibliography{library}

\clearpage

\section*{Figures}
\label{figures}

\begin{figure}[!htb]
{\centering
\includegraphics[width=\textwidth]{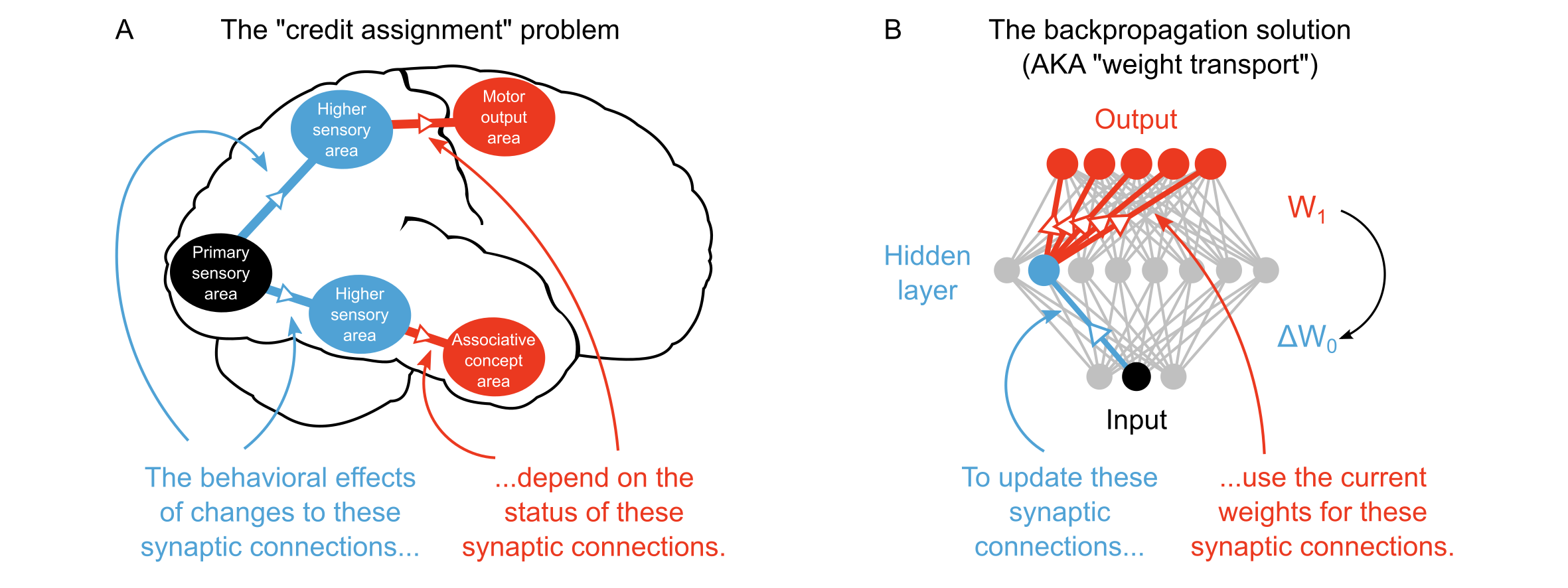}
\caption{\textbf{The credit assignment problem in multi-layer neural networks.}}
\par}
\medskip
\small  
(\textbf{A}) Illustration of the credit assignment problem. In order to take full advantage of the multi-circuit architecture of the neocortex when learning, synapses in earlier processing stages (blue connections) must somehow receive ``credit'' for their impact on behavior or cognition. However, the credit due to any given synapse early in a processing pathway depends on the downstream synaptic connections that link the early pathway to later computations (red connections). (\textbf{B}) Illustration of weight transport in backpropagation. To solve the credit assignment problem, the backpropagation of error algorithm explicitly calculates the credit due to each synapse in the hidden layer by using the downstream synaptic weights when calculating the hidden layer weight changes. This solution works well in AI applications, but is unlikely to occur in the real brain.
\label{fig:F1}
\end{figure}
\clearpage

\begin{figure}[!htb]
{\centering
\includegraphics[width=\textwidth]{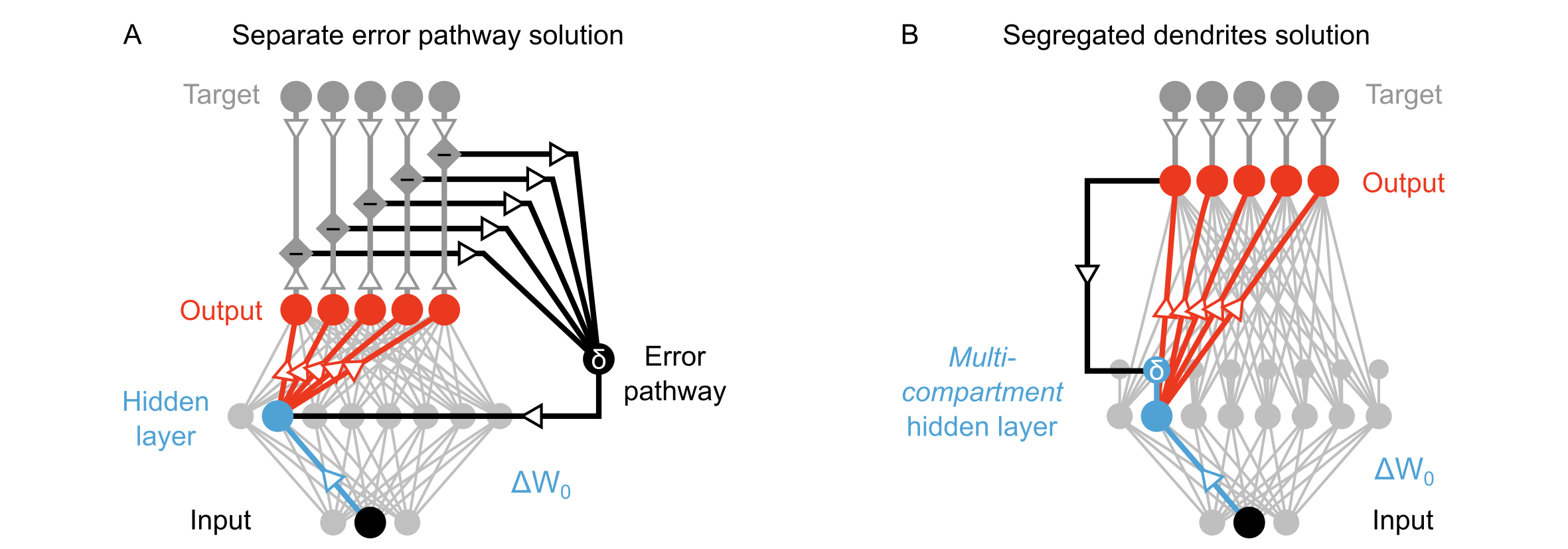}
\caption{\textbf{Potential solutions to credit assignment using top-down feedback.}}
\par}
\medskip
\small  
(\textbf{A}) Illustration of the implicit feedback pathway used in previous models of deep learning. In order to assign credit, feedforward information must be integrated separately from any feedback signals used to calculate error for synaptic updates (the error is indicated here with $\delta$). (\textbf{B}) Illustration of the segregated dendrites proposal. Rather than using a separate pathway to calculate error based on feedback, segregated dendritic compartments could receive feedback and calculate the error signals locally.
\label{fig:F2}
\end{figure}
\clearpage

\begin{figure}[!htb]
{\centering
\includegraphics[width=\textwidth]{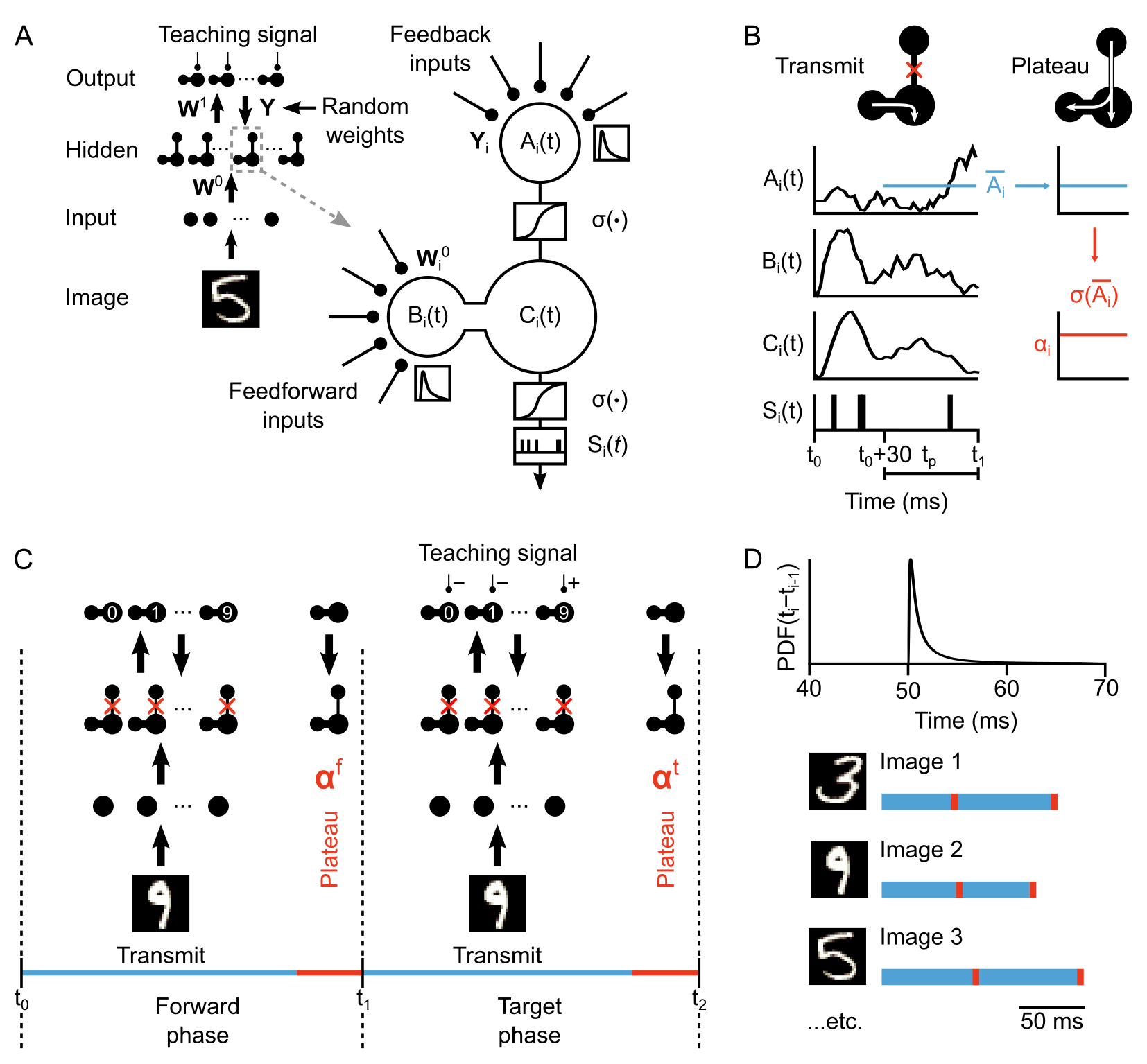}
\caption{\textbf{Illustration of a multi-compartment neural network model for deep learning.}}
\par}
\medskip
\small
(\textbf{A}) \textit{Left}: Diagram of our network architecture. An input image is represented by the spiking of input units which propagate to the hidden layer through weights $\bm{W}^0$. Hidden layer activities arrive at the output layer through weights $\bm{W}^1$. Feedback from the output layer is sent back to the hidden layer through fixed, random weights $\bm{Y}$. \textit{Right}: Illustration of unit $i$ in the hidden layer. The unit has a basal dendrite, soma, and apical dendrite with membrane potentials $B_i$, $C_i$ and $A_i$, respectively. The apical dendrite communicates to the soma predominantly using non-linear signals $\sigma(\cdot)$. The spiking output of the cell, $S_i(t)$, is a Poisson process with rate $\lambda_{max}\sigma(C_i(t))$.
(\textbf{B}) Illustration of neuronal dynamics in the two modes of processing. In the transmit mode, the apical dendrite accumulates a measure of its average membrane potential from $t_1 - t_p$ to $t_1$. In the plateau mode, an apical plateau potential equal to the nonlinearity $\sigma(\cdot)$ applied to the average apical potential travels to the soma and basal dendrite.
(\textbf{C}) Illustration of the sequence of network phases that occur for each training example. The network undergoes a forward phase (transmit \& plateau) and a target phase (transmit \& plateau) in sequence. In the target phase, a teaching signal is introduced to the output layer, shaping its activity.
(\textbf{D}) Illustration of phase length sampling. For each training example, the lengths of forward \& target phases are randomly drawn from a shifted inverse Gaussian distribution with a minimum of 50 ms.
\label{fig:F3}
\end{figure}
\clearpage

\begin{figure}[!htb]
{\centering
\includegraphics[width=\textwidth]{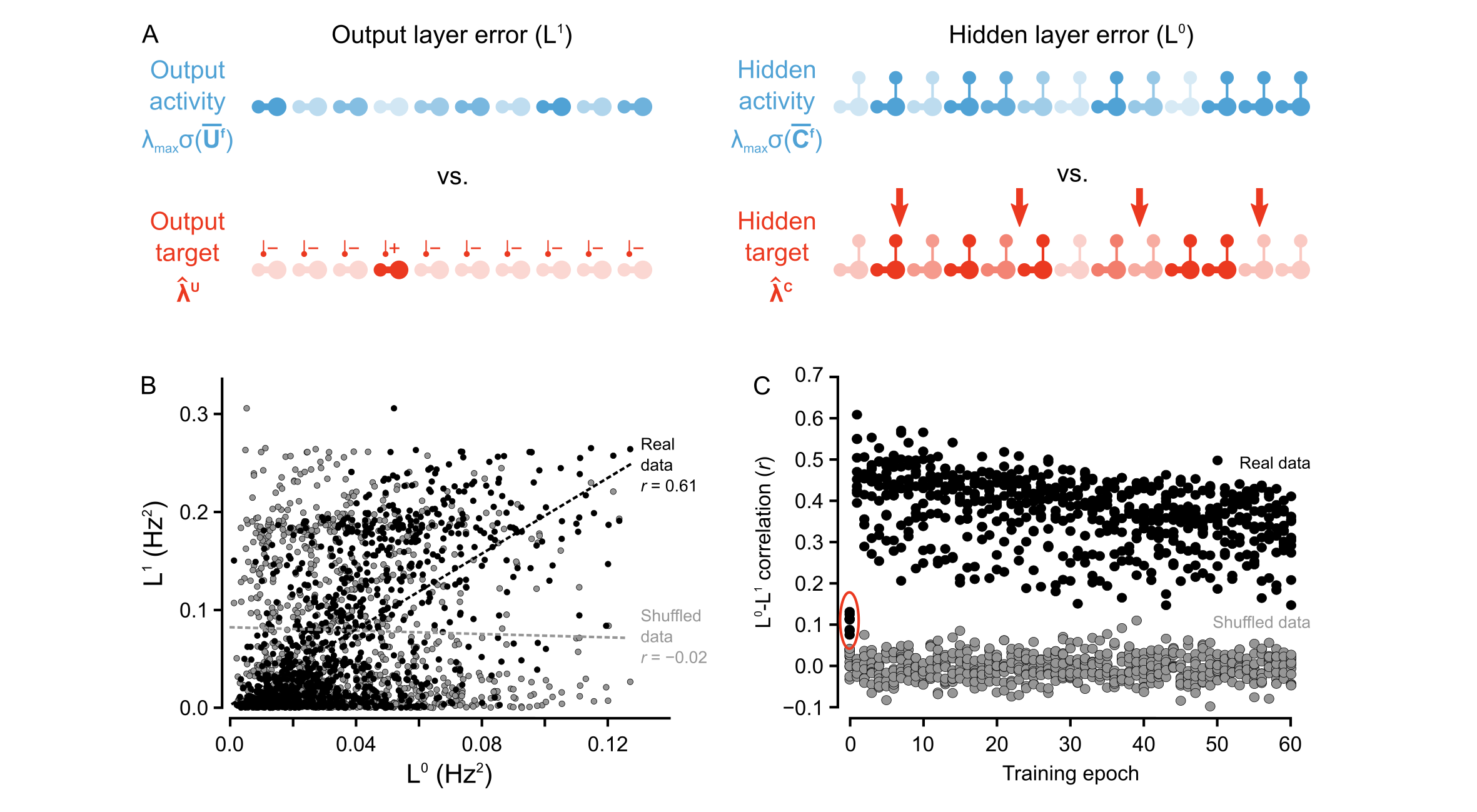}
\caption{\textbf{Co-ordinated errors between the output and hidden layers.}}
\par}
\medskip
\small
(\textbf{A}) Illustration of output error ($L^1$) and local hidden error ($L^0$). For a given test example shown to the network in a forward phase, the output layer error is defined as the squared norm of the difference between target firing rates $\bm{\hat{\lambda}}^U$ and the sigmoid function applied to the average somatic potentials $\bm{\overline{U}}^f$ across all output units. Hidden layer error is defined similarly, except the target is $\bm{\hat{\lambda}}^C$ (as defined in the text).
(\textbf{B}) Plot of $L^1$ vs. $L^0$ for all of the `2' images after one epoch of training. There is a strong correlation between hidden layer error and output layer error (real data, black), as opposed to when output and hidden errors were randomly paired (shuffled data, gray).
(\textbf{C}) Plot of correlation between hidden layer error and output layer error across training for each category of images (each dot represents one category). The correlation is significantly higher in the real data than the shuffled data throughout training. Note also that the correlation is much lower on the first epoch of training (red oval), suggesting that the conditions for credit assignment are still developing during the first epoch.
\label{fig:F4}
\end{figure}
\clearpage

\begin{figure}[!htb]
{\centering
\includegraphics[width=\textwidth]{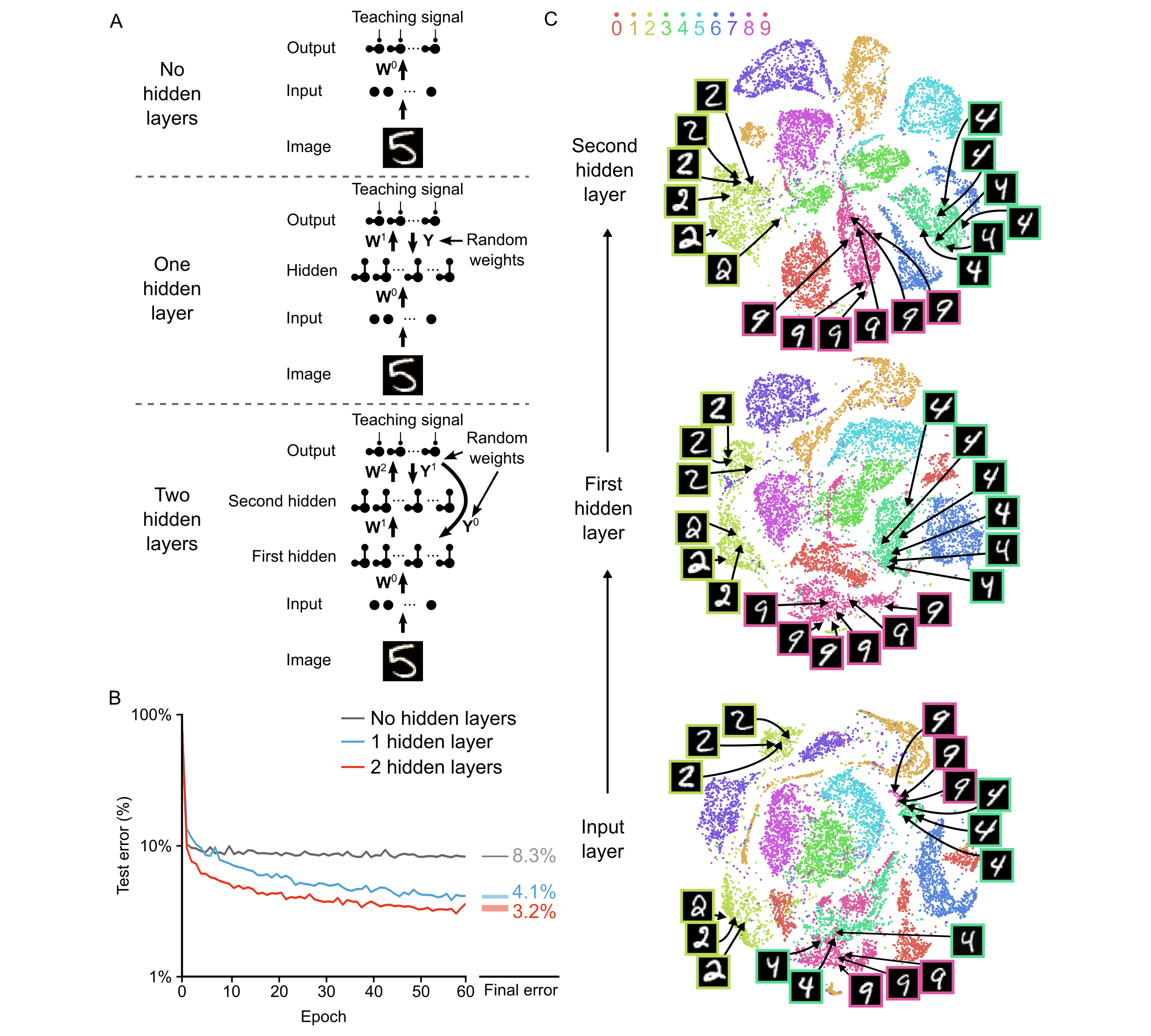}
\caption{\textbf{Improvement of learning with hidden layers.}}
\par}
\medskip
\small
(\textbf{A}) Illustration of the three networks used in the simulations. \textit{Top}: a shallow network with only an input layer and an output layer. \textit{Middle}: a network with one hidden layer. \textit{Bottom}: a network with two hidden layers. Both hidden layers receive feedback from the output layer, but through separate synaptic connections with random weights $\bm{Y}^0$ and $\bm{Y}^1$.
(\textbf{B}) Plot of test error (measured on 10,000 MNIST images not used for training) across 60 epochs of training, for all three networks described in \textbf{A}. The networks with hidden layers exhibit deep learning, because hidden layers decrease the test error. \textit{Right}: Spreads (min -- max) of the results of repeated weight tests ($n=20$) after 60 epochs for each of the networks. Percentages indicate means (two-tailed t-test, 1-layer vs. 2-layer: $t_{38}=197.11$, $P_{38}=2.5\times10^{-58}$; 1-layer vs. 3-layer: $t_{38}=238.26$, $P_{38}=1.9\times10^{-61}$; 2-layer vs. 3-layer: $t_{38}=42.99$, $P_{38}=2.3\times10^{-33}$, Bonferroni correction for multiple comparisons). (\textbf{C}) Results of t-SNE dimensionality reduction applied to the activity patterns of the first three layers of a two hidden layer network (after 60 epochs of training). Each data point corresponds to a test image shown to the network. Points are color-coded according to the digit they represent. Moving up through the network, images from identical categories are clustered closer together and separated from images of different catefories. Thus the hidden layers learn increasingly abstract representations of digit categories.
\label{fig:F5}
\end{figure}
\clearpage

\begin{figure}[!htb]
{\centering
\includegraphics[width=\textwidth]{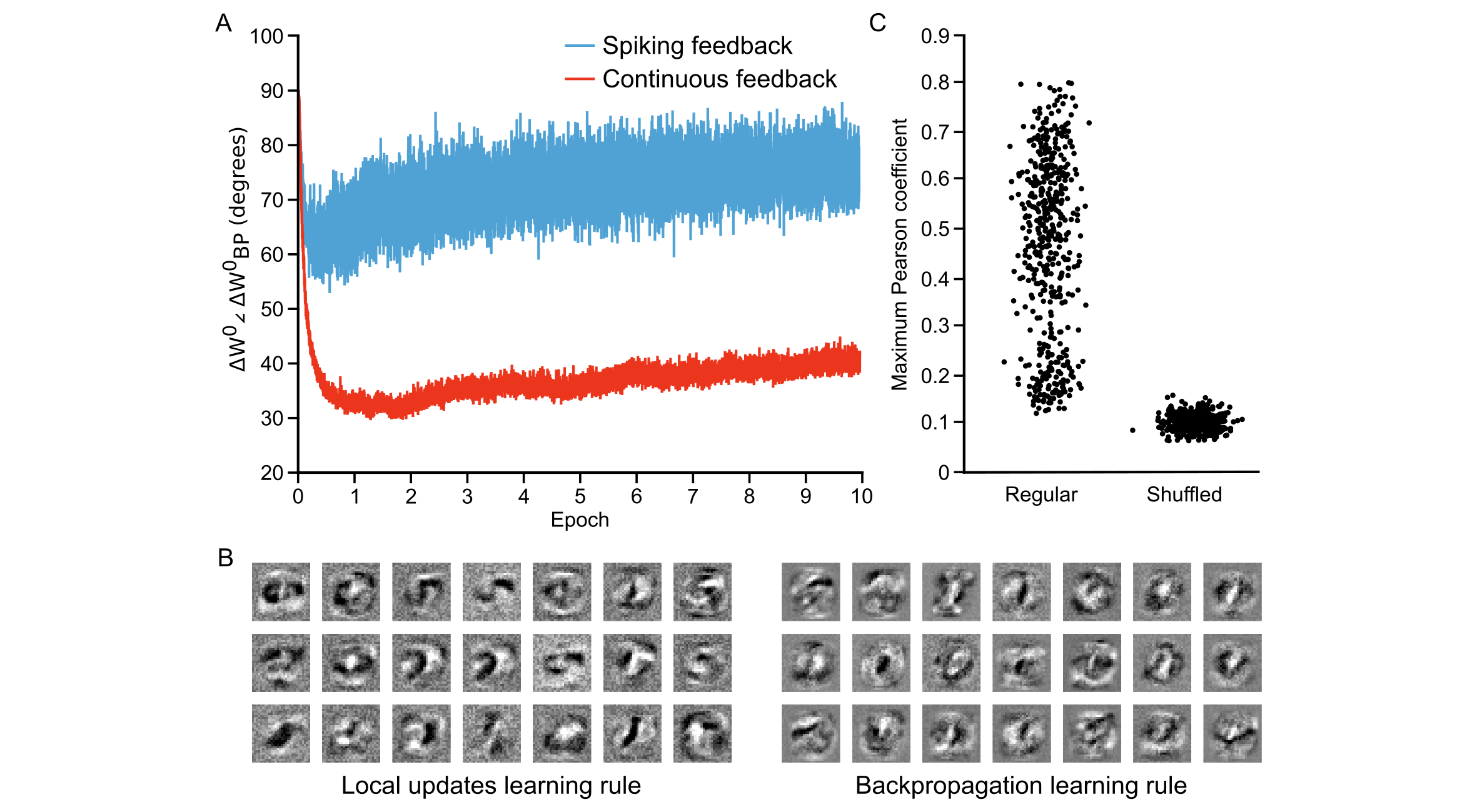}
\caption{\textbf{Approximation of backpropagation with local learning rules.}}
\par}
\medskip
\small
(\textbf{A}) Plot of the angle between weight updates prescribed by our local update learning algorithm compared to those prescribed by backpropagation of error, for a one hidden layer network over 10 epochs of training (each point on the horizontal axis corresponds to one image presentation). Data was time-averaged using a sliding window of 100 image presentations. When training the network using the local update learning algorithm, feedback was sent to the hidden layer either using spiking activity from the output layer units (blue) or by directly sending the spike rates of output units (red). The angle between the local update $\Delta \bm{W}^0$ and backpropagation weight updates $\Delta \bm{W}^0_{BP}$ remains under $90^\circ$ during training, indicating that both algorithms point weight updates in a similar direction.
(\textbf{B}) Examples of hidden layer receptive fields (synaptic weights) obtained by training the network in \textbf{A} using our local update learning rule (left) and backpropagation of error (right) for 60 epochs.
(\textbf{C}) Plot of correlation between local update receptive fields and backpropagation receptive fields. For each of the receptive fields produced by local update, we plot the maximum Pearson correlation coefficient between it and all 500 receptive fields learned using backpropagation (Regular). Overall, the maximum correlation coefficients are greater than those obtained after shuffling all of the values of the local update receptive fields (Shuffled).
\label{fig:F6}
\end{figure}
\clearpage

\begin{figure}[!htb]
{\centering
\includegraphics[width=\textwidth]{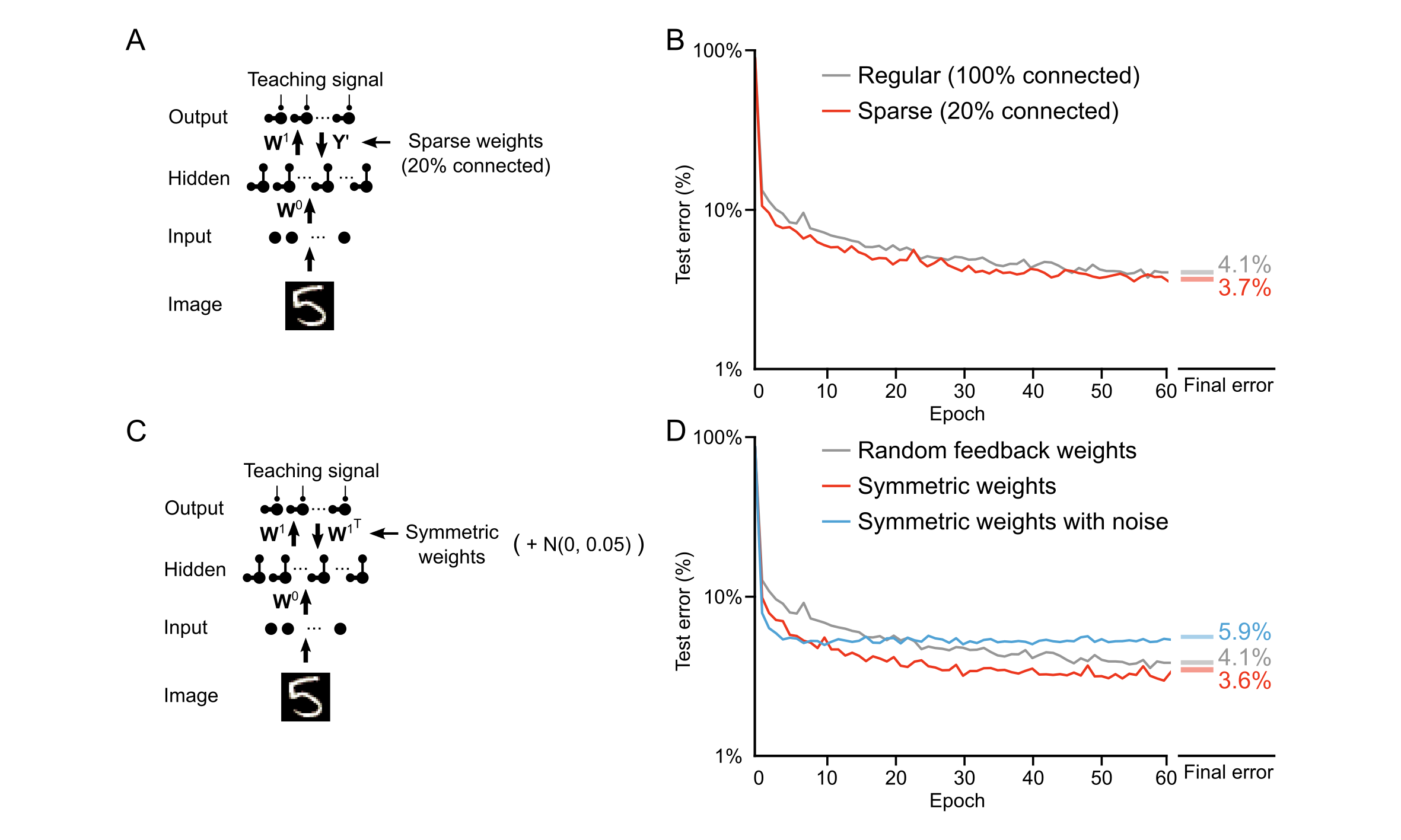}
\caption{\textbf{Conditions on feedback synapses for effective learning.}}
\par}
\medskip
\small
(\textbf{A}) Diagram of a one hidden layer network trained in \textbf{B}, with 80\% of feedback weights set to zero. The remaining feedback weights $\bm{Y}'$ were multiplied by 5 in order to maintain a similar overall magnitude of feedback signals.
(\textbf{B}) Plot of test error across 60 epochs for our standard one hidden layer network (gray) and a network with sparse feedback weights (red). Sparse feedback weights resulted in improved learning performance compared to fully connected feedback weights. \textit{Right}: Spreads (min -- max) of the results of repeated weight tests ($n=20$) after 60 epochs for each of the networks. Percentages indicate mean final test errors for each network (two-tailed t-test, regular vs. sparse: $t_{38}=16.43$, $P_{38}=7.4\times10^{-19}$).
(\textbf{C}) Diagram of a one hidden layer network trained in \textbf{D}, with feedback weights that are symmetric to feedforward weights $\bm{W}^1$, and symmetric but with added noise. Noise added to feedback weights is drawn from a normal distribution with variance $\sigma = 0.05$.
(\textbf{D}) Plot of test error across 60 epochs of our standard one hidden layer network (gray), a network with symmetric weights (red), and a network with symmetric weights with added noise (blue). Symmetric weights result in improved learning performance compared to random feedback weights, but adding noise to symmetric weights results in impaired learning. \textit{Right}: Spreads (min -- max) of the results of repeated weight tests ($n=20$) after 60 epochs for each of the networks. Percentages indicate means (two-tailed t-test, random vs. symmetric: $t_{38}=18.46$, $P_{38}=4.3\times10^{-20}$; random vs. symmetric with noise: $t_{38}=-71.54$, $P_{38}=1.2\times10^{-41}$; symmetric vs. symmetric with noise: $t_{38}=-80.35$, $P_{38}=1.5\times10^{-43}$, Bonferroni correction for multiple comparisons).
\label{fig:F7}
\end{figure}
\clearpage

\begin{figure}[!htb]
{\centering
\includegraphics[width=\textwidth]{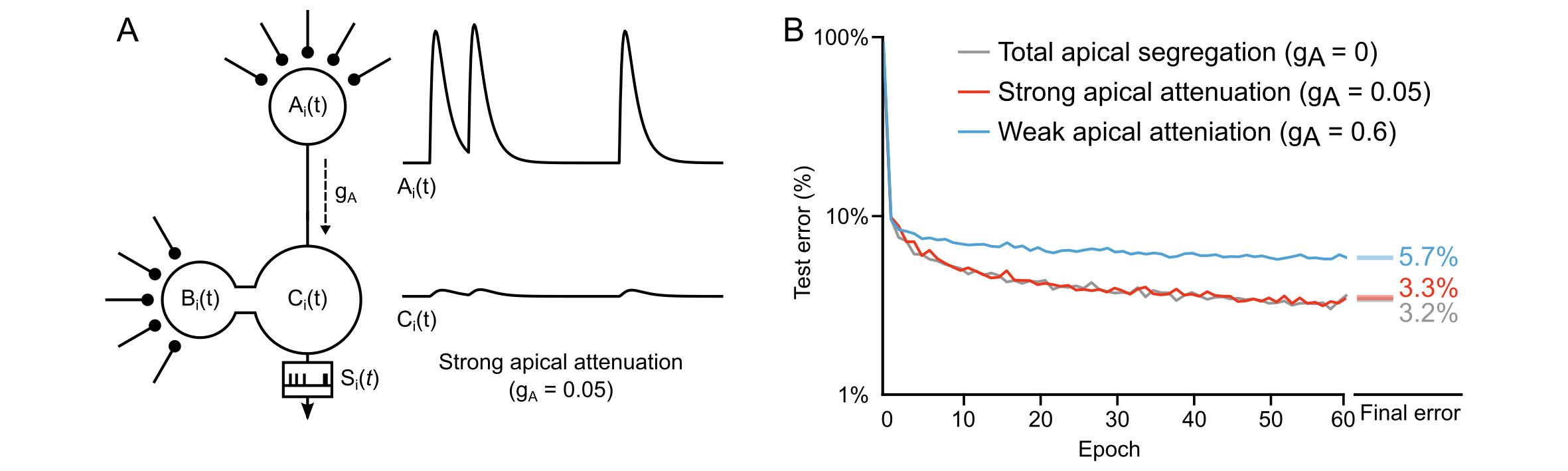}
\caption{\textbf{Importance of dendritic segregation for deep learning.}}
\par}
\medskip
\small
(\textbf{A}) \textit{Left}: Diagram of a hidden layer neuron with an unsegregated apical dendrite that affects the somatic potential $C_i$. $g_A$ represents the strength of the coupling between the apical dendrite and soma. \textit{Right}: Example traces of the apical voltage $A_i$ and the somatic voltage $C_i$ in response to spikes arriving at apical synapses. Here $g_A = 0.05$, so the apical activity is strongly attenuated at the soma.
(\textbf{B}) Plot of test error across 60 epochs of training on MNIST of a two hidden layer network, with total apical segregation (gray), strong apical attenuation (red) and weak apical attenuation (blue). Apical input to the soma did not prevent learning if it was strongly attenuated, but weak apical attenuation impaired deep learning. \textit{Right}: Spreads (min -- max) of the results of repeated weight tests ($n=20$) after 60 epochs for each of the networks. Percentages indicate means (two-tailed t-test, total segregation vs. strong attenuation: $t_{38}=-4.00$, $P_{38}=8.4\times10^{-4}$; total segregation vs. weak attenuation: $t_{38}=-95.24$, $P_{38}=2.4\times10^{-46}$; strong attenuation vs. weak attenuation: $t_{38}=-92.51$, $P_{38}=7.1\times10^{-46}$, Bonferroni correction for multiple comparisons).
\label{fig:F8}
\end{figure}
\clearpage

\section*{Figure Supplements}
\label{sfigures}

\begin{figure}[!htb]
{\centering
\includegraphics[width=0.5\textwidth]{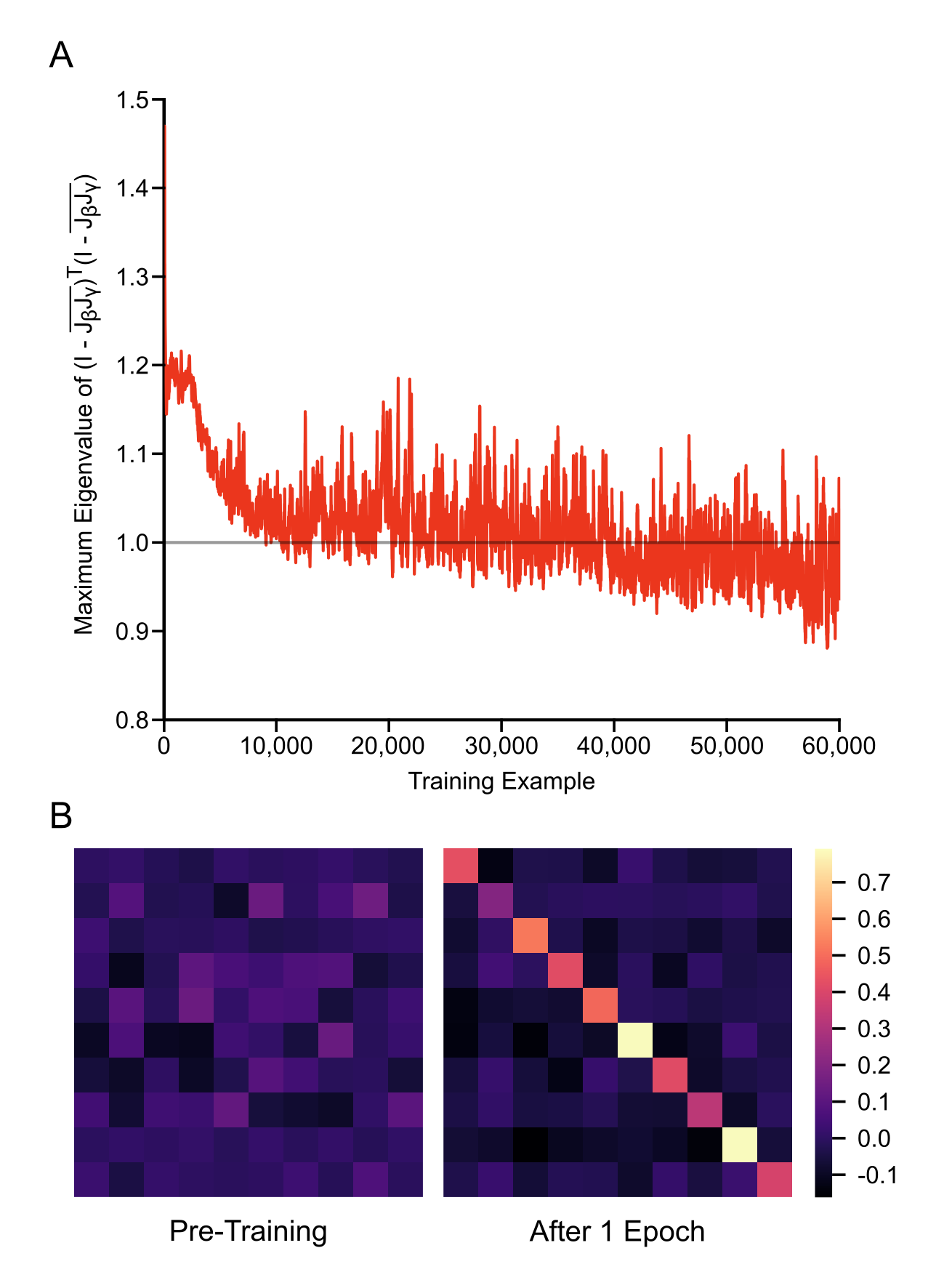}
\captionsetup{labelformat=empty}
\caption{\textbf{Figure 4, Supplement 1. Weight alignment during first epoch of training.}}
\par}
\medskip
\small
(\textbf{A}) Plot of the maximum eigenvalue of $(I - \overline{J_\beta} \overline{J_\gamma})^T (I - \overline{J_\beta} \overline{J_\gamma})$ over 60,000 training examples for a one hidden layer network, where $\overline{J_\beta}$ and $\overline{J_\gamma}$ are the mean feedforward and feedback Jacobian matrices for the last 100 training examples. The maximum eigenvalue of $(I - \overline{J_\beta} \overline{J_\gamma})^T (I - \overline{J_\beta} \overline{J_\gamma})$ drops below 1 as learning progresses, satisfying the main condition for the learning guarantee described in \hyperref[thm1]{Theorem 1} to hold. (\textbf{B}) The product of the mean feedforward and feedback Jacobian matrices, $\overline{J_\beta} \overline{J_\gamma}$, for a one hidden layer network, before training (left) and after 1 epoch of training (right). As training progresses, the network updates its weights in a way that causes this product to approach the identity matrix, meaning that the two matrices are roughly inverses of each other.
\label{fig:F4S1}
\end{figure}
\clearpage

\begin{figure}[!htb]
{\centering
\includegraphics[width=\textwidth]{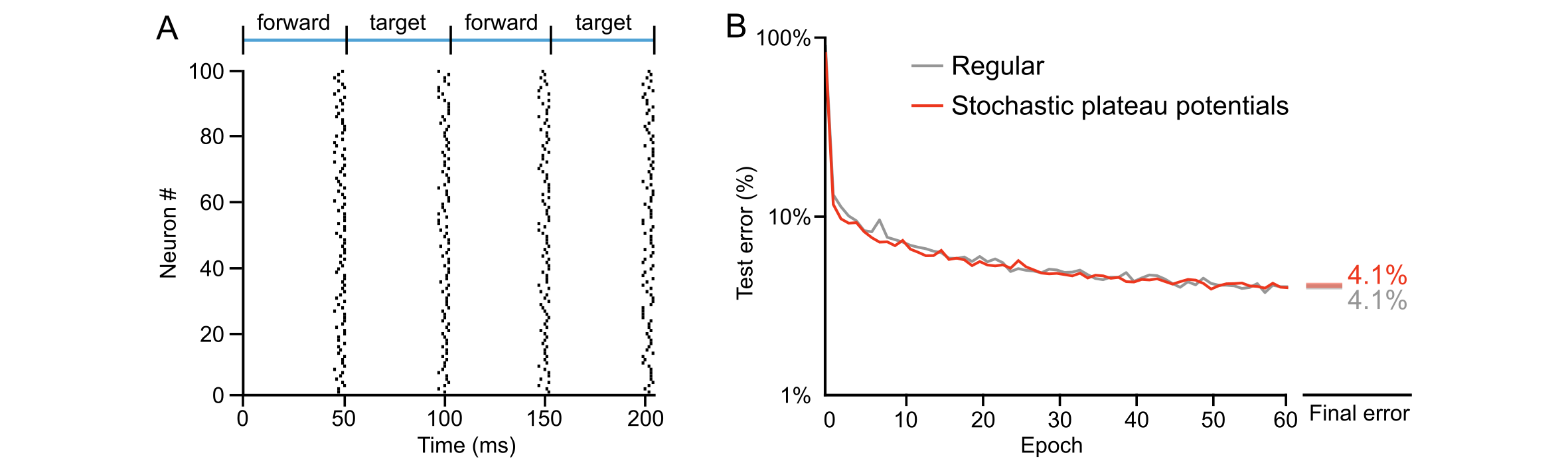}
\captionsetup{labelformat=empty}
\caption{\textbf{Figure 5, Supplement 1. Learning with stochastic plateau times.}}
\par}
\medskip
\small
(\textbf{A}) \textit{Left}: Raster plot showing plateau potential times during presentation of two training examples for 100 neurons in the hidden layer of a network where plateau potential times were randomly sampled for each neuron from a folded normal distribution ($\mu=0, \sigma^2=3$) that was truncated ($\text{max}=5$) such that plateau potentials occurred between 0 ms and 5 ms before the start of the next phase. In this scenario, the apical potential over the last 30 ms was integrated to calculate the plateau potential for each neuron. (\textbf{B}) Plot of test error across 60 epochs of training on MNIST of a one hidden layer network, with synchronized plateau potentials (gray) and with stochastic plateau potentials (red). Allowing neurons to undergo plateau potentials in a stochastic manner did not hinder training performance.
\label{fig:F5S1}
\end{figure}
\clearpage

\begin{figure}[!htb]
{\centering
\includegraphics[width=0.7\textwidth]{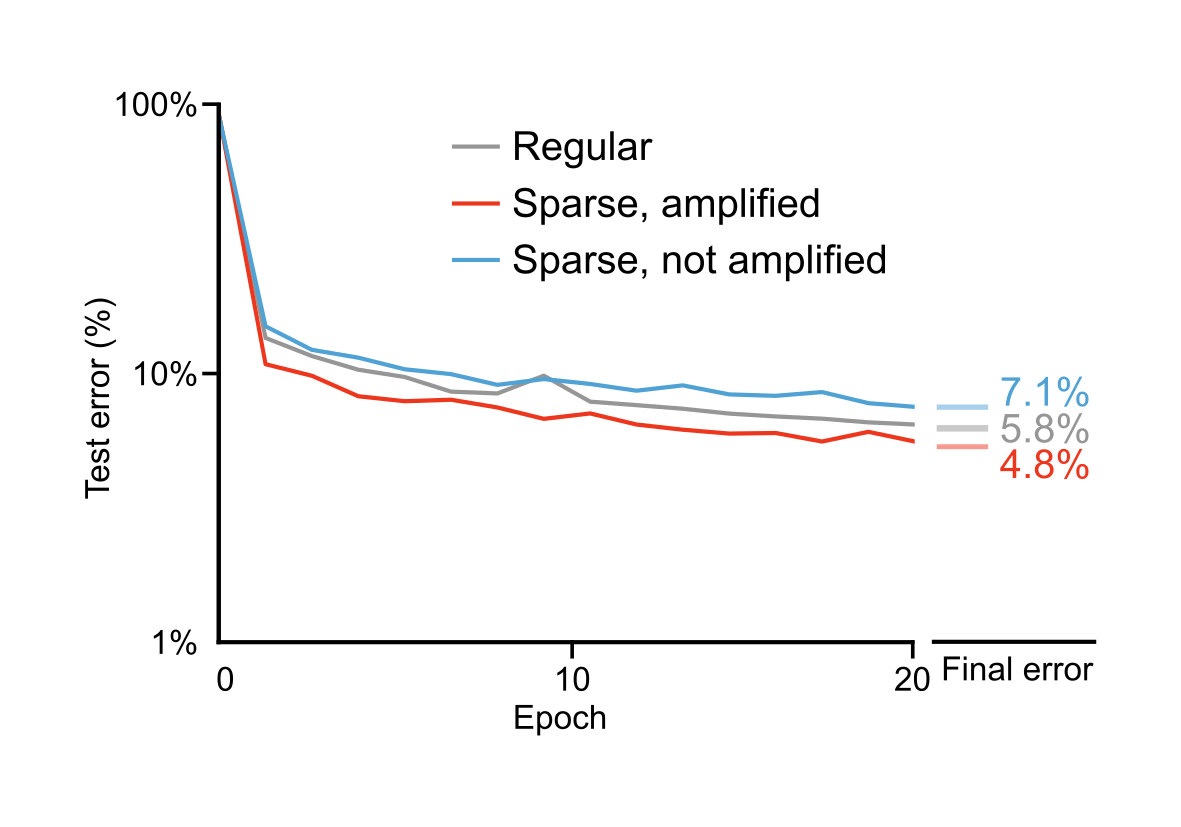}
\captionsetup{labelformat=empty}
\caption{\textbf{Figure 7, Supplement 1. Importance of weight magnitudes for learning with sparse weights.}}
\par}
\medskip
\small
Plot of test error across 20 epochs of training on MNIST of a one hidden layer network, with regular feedback weights (gray), sparse feedback weights that were amplified (red), and sparse feedback weights that were not amplified (blue). The network with amplified sparse feedback weights is the same as in \hyperref[fig:F7]{Figure 7A \& B}, where feedback weights were multiplied by a factor of 5. While sparse feedback weights that were amplified led to improved training performance, sparse weights without amplification impaired the network's learning ability. \textit{Right}: Spreads (min -- max) of the results of repeated weight tests ($n=20$) after 20 epochs for each of the networks. Percentages indicate means (two-tailed t-test, regular vs. sparse, amplified: $t_{38}=44.96$, $P_{38}=4.4\times10^{-34}$; regular vs. sparse, not amplified: $t_{38}=-51.30$, $P_{38}=3.2\times10^{-36}$; sparse, amplified vs. sparse, not amplified: $t_{38}=-100.73$, $P_{38}=2.8\times10^{-47}$, Bonferroni correction for multiple comparisons).
\label{fig:F7S1}
\end{figure}
\clearpage

\end{document}